\title{Learning against Non-credible Auctions}
\author{
    Qian Wang\textsuperscript{\rm 1},
    Xuanzhi Xia\textsuperscript{\rm 2},
    Zongjun Yang\textsuperscript{\rm 3},
    Xiaotie Deng\textsuperscript{\rm 1},
    Yuqing Kong\textsuperscript{\rm 1},\\
    Zhilin Zhang\textsuperscript{\rm 4},
    Liang Wang\textsuperscript{\rm 4},
    Chuan Yu\textsuperscript{\rm 4},
    Jian Xu\textsuperscript{\rm 4},
    Bo Zheng\textsuperscript{\rm 4},
}
\theoremstyle{plain}
\newtheorem{theorem}{Theorem}[section]
\newtheorem{proposition}[theorem]{Proposition}
\newtheorem{lemma}[theorem]{Lemma}
\newtheorem{corollary}[theorem]{Corollary}
\theoremstyle{definition}
\newtheorem{assumption}[theorem]{Assumption}
\theoremstyle{remark}
\newtheorem{example}[theorem]{Example}
\crefname{assumption}{Assumption}{Assumptions}
\definecolor{pku-red}{RGB}{139,0,18}
\newcommand{\calB}{\mathcal{B}}
\newcommand{\calH}{\mathcal{H}}
\newcommand{\calL}{\mathcal{L}}
\newcommand{\calR}{\mathcal{R}}
\newcommand{\calV}{\mathcal{V}}
\newcommand{\bbE}{\mathbb{E}}
\newcommand{\bbG}{\mathbb{G}}
\newcommand{\bbR}{\mathbb{R}}
\newcommand{\vd}{\boldsymbol{d}}
\newcommand{\vv}{\boldsymbol{v}}
\newcommand{\balpha}{\boldsymbol{\alpha}}
\begin{document}

\maketitle

\begin{abstract}

The standard framework of online bidding algorithm design assumes that the seller commits himself to faithfully implementing the rules of the adopted auction. However, the seller may attempt to cheat in execution to increase his revenue if the auction belongs to the class of non-credible auctions. For example, in a second-price auction, the seller could create a fake bid between the highest bid and the second highest bid. This paper focuses on one such case of online bidding in repeated second-price auctions. At each time $t$, the winner with bid $b_t$ is charged not the highest competing bid $d_t$ but a manipulated price $p_t = \alpha_0 d_t + (1-\alpha_0) b_t$, where the parameter $\alpha_0 \in [0, 1]$ in essence measures the seller's credibility. Unlike classic repeated-auction settings where the bidder has access to samples $(d_s)_{s=1}^{t-1}$, she can only receive mixed signals of $(b_s)_{s=1}^{t-1}$, $(d_s)_{s=1}^{t-1}$ and $\alpha_0$ in this problem. The task for the bidder is to learn not only the bid distributions of her competitors but also the seller's credibility. We establish regret lower bounds in various information models and provide corresponding online bidding algorithms that can achieve near-optimal performance. Specifically, we consider three cases of prior information based on whether the credibility $\alpha_0$ and the distribution of the highest competing bids are known. Our goal is to characterize the landscape of online bidding in non-credible auctions and understand the impact of the seller's credibility on online bidding algorithm design under different information structures.

\end{abstract}
\section{Introduction}

Digital advertising has experienced significant expansion due to the rapid rise of online-activities, surpassing traditional advertising as the dominant marketing influence in various industries.
Between 2021 and 2022, digital advertising revenues in U.S. grew 10.8\% year-over-year totalling \$209.7 billion dollars~\cite{iab2022report}. 
In practice, a huge amount of online ads are sold via real-time auctions implemented on advertising platforms and advertisers participate in such repeated online auctions to purchase advertising opportunities.
This has motivated a flourishing line of work to focus on the problem of online bidding algorithm design. 
In particular, learning to bid in repeated second-price auctions---often with constraints or unknown own valuations---has been well studied due to the popularity of this auction format in practice \cite{weed2016online,balseiro2019learning,golrezaei2021bidding,balseiro2022best,feng2022online,chen2022dynamic,chen2023coordinated}. However, these studies are in fact based on an implicit assumption that the seller commits himself to faithfully implementing the rules of the announced second-price auction.

The possibility that a seller can profitably cheat in a second-price auction was pointed out as early as the seminal paper \cite{vickrey1961counterspeculation} that introduced this auction format. 
After observing all the bids, the seller can strategically exaggerate the highest competing bid and overcharge the winner up to the amount of her own bid. 
Several following papers studied the issue of seller cheating in second-price auctions \cite{rothkopf1995two,porter2005cheating,mcadams2007pays}.
Recent theoretical work by \citet{akbarpour2020credible} formally modelled \textit{credibility} in an extensive-form game where the seller is allowed to deviate from the auction rules as long as the deviation cannot be detected by bidders. 
They defined an auction to be \textit{credible} if it is incentive-compatible for the seller to follow the rules in the presence of cheating opportunities. 
In a second-price auction, the seller can even charge the winner the amount of her own bid to obtain higher utility, with an innocent explanation that the highest and second-highest bids are identical. 
Therefore, the prevalent second-price auction belongs to the class of non-credible auctions in this framework. 
Taking credibility into consideration, advertisers are confronted with a question of practical importance: how should an advertiser bid in repeated non-credible second-price auctions to maximize her cumulative utility?

In this work, we formulate the above problem as an online learning problem for a single bidder. We consider the scenario with a single seller who runs repeated non-credible second-price auctions. At each time~$t$, the winner with bid~$b_t$ is charged not the highest competing bid~$d_t$ but a manipulated price $p_t = \alpha_0 d_t + (1-\alpha_0) b_t$ for some $\alpha_0 \in [0, 1]$. The parameter $\alpha_0$ in essence captures the seller's credibility, the extent to which the seller deviates from the second-price auction rules. 
Our linear model is equivalent to the classic bid-shilling model \cite{rothkopf1995two,porter2005cheating,mcadams2007pays} in expectation. In the bid-shilling model, the seller cheats by inserting a shill bid after observing all of the bids with probability $P^c$. The seller is assumed to take full advantage of his power so the winner will pay her own bid if the seller does cheat. Then the winner's expected payment is $P^c b_t + (1-P^c) d_t$. Moreover, no matter what charging rules the seller actually uses, as long as the estimated credibility within this linear model is away from $1$, it can be confirmed that the seller is cheating. We believe our results and techniques have implications for the more complicated setting with $p_t = h(b_t, d_t;\alpha)$.

We assume the highest competing bids $(d_t)_{t=1}^T$ are \textit{i.i.d.} sampled from a distribution $G$. 
The bidder aims to maximize her expected cumulative utility, which is given by the expected difference between the total value and the total payment. 
Moving to the information model, We investigate three cases of prior information: \begin{enumerate*}[label={(\arabic*)}]
    \item known $\alpha_0$ and unknown $G$;
    \item unknown $\alpha_0$ and known $G$;
    \item unknown $\alpha_0$ and unknown $G$.
\end{enumerate*}
For all three cases, we consider bandit feedback where the bidder can observe the realized allocation and cost at each round. 
For the last case where neither is unknown, we additionally consider full feedback where the price $p_t$ is always observable regardless of the auction outcome.
More discussions on modeling will be placed in \Cref{sec: problem formulation}.

The key challenge of this problem lies in the lack of credibility and its impact on the learning process. 
If assuming the seller has full commitment, it is well known that truthful bidding is the dominant strategy in second-price auctions, but this truthful property no longer holds in non-credible second-price auctions.
Identifying optimal bidding strategies for utility maximization requires not only knowing the bidder's own values but also considering the strategies of her competitors and the seller.
In classic repeated-auction settings (assuming a trustworthy seller), online bidding algorithms can collect historical samples $(d_s)_{s=1}^{t-1}$ to estimate distribution $G$. However, the bidder in non-credible auctions needs to cope with an additional dimension of uncertainty: the available observations under either bandit or full feedback are all manipulated prices, i.e., mixed signals of $(b_s)_{s=1}^{t-1}, (d_s)_{s=1}^{t-1}$ and $\alpha_0$.
As a result, difficulties arise in the estimation of the distribution of her competitors' bids and the seller's credibility.

\subsection{Main Contributions}

First, we characterize the optimal clairvoyant bidding strategy in non-credible second-price auctions when the bidder knows both credibility $\alpha_0$ and distribution $G$. This optimal clairvoyant bidding strategy is also used as the benchmark strategy in the regret definition.

Next, we establish regret lower bounds in various information models and provide corresponding online bidding algorithms that are optimal up to log factors. Our results are summarized in \Cref{tab: result summary}.
\begin{itemize}
    \item For the case where $G$ is unknown and $\alpha_0$ is known, we explore the landscape by discussing how the problem varies with different credibility parameter $\alpha_0 = 0, \alpha_0 = 1$ and $\alpha_0 \in(0,1)$. We mainly contribute to the regret analysis for $\alpha_0 \in(0,1)$, with a proven $\Omega(\sqrt{T})$ lower bound, and a concrete near-optimal $\widetilde{O}(\sqrt{T})$ algorithm.
    
    \item For the case where $G$ is known and $\alpha_0$ is unknown, we develop an $O(\log^2{T})$ algorithm, which adopts a dynamic estimation approach to approximate  $\alpha_0$.
    
    \item For the challenging case where both $G$ and $\alpha_0$ are unknown, we observe that under bandit feedback, an $\Omega(T^{2/3})$ lower bound and an $\widetilde{O}(T^{2/3})$ algorithm follow directly from existing algorithms. We then turn to the more interesting setting with full information feedback, for which we propose an episodic bidding algorithm that learns $\alpha_0$ and $G$ simultaneously in an efficient manner, while achieving a near-optimal regret of $\widetilde{O}(\sqrt{T})$.
\end{itemize}

Overall, this work provides a theoretical regret analysis for learning against non-credible auctions. We aim to characterize the landscape of online bidding in non-credible auctions and analyze how the seller' credibility influences the design of online bidding algorithms under different information structures.

\begin{table*}
    \caption{Result Summary.}
    \label{tab: result summary}
    \centering
    \begin{tabular}{cccccc}
        \toprule
        $\balpha_0$ & $\boldsymbol{G}$ & \textbf{Feedback} & \textbf{Upper bound} & \textbf{Lower bound} & \textbf{Theorem} \\
        \midrule
        Known, $\alpha_0 = 1$ & \multirow{3}{*}{Unknown} & \multirow{3}{*}{Bandit} & 0 & 0 & \multirow{3}{*}{\Cref{thm: known alpha unknown G}$\phantom{^2}$} \\
        \cmidrule(r){1-1}\cmidrule(r){4-5}
        Known, $\alpha_0 \in (0, 1)$ & & & $\widetilde{O}(T^{1/2})$ & $\Omega(T^{1/2})$ & \\
        \cmidrule(r){1-1}\cmidrule(r){4-5}
        Known, $\alpha_0 = 0$ & & & $\widetilde{O}(T^{2/3})$ & $\Omega(T^{2/3})$ & \\
        \cmidrule(r){1-6}
        Unknown & Known & Bandit & $\widetilde{O}(1)$ & $\Omega(1)$ & \Cref{thm: unknown alpha known G}$^1$ \\
        \cmidrule(r){1-6}
        \multirow{2}{*}{Unknown} & \multirow{2}{*}{Unknown} & Bandit & $\widetilde{O}(T^{2/3})$ & $\Omega(T^{2/3})$ & \Cref{cor: unknown alpha unknown G} \\
        \cmidrule(r){3-6}
        & & Full & $\widetilde{O}(T^{1/2})$ & $\Omega(T^{1/2})$ & \Cref{thm: unknown alpha unknown G full}$^1$ \\
        \bottomrule
    \end{tabular}
    
    \medskip
    $^1$ The regret bounds of these theorems rely on corresponding assumptions.
\end{table*}

\subsection{Related Work}

\textbf{Credibility in auctions.} The issue of seller cheating has been studied by the game-theoretic literature in a strategic framework ~\citep{mcadams2007pays,porter2005cheating,rothkopf1995two}.
Recently, \citet{akbarpour2020credible} explored the setting where the seller deviates from the auction rules in a way that can be innocently explained. To this end, they defined credibility based on the detectability of seller deviations. 
They further established an impossibility result that no optimal auction simultaneously achieves staticity, credibility and strategy-proofness. They showed that the first-price auction is the unique static optimal auction that achieves credibility. 
\citet{epub94449} considered the general allocation problem and introduced the definition of \textit{verifiability} (i.e., allowing participants to check the correctness of their assignments) and \textit{transparency} (i.e., allowing participants to check whether the allocation rule is deviated). These are stronger security notions than the credibility concept investigated by \cite{akbarpour2020credible}. 
\citet{dianat2021credibility} studied how the credibility of an auction format affects bidding behavior and final outcomes via laboratory experiments. Their empirical findings confirm the theory that sellers do have incentives to break the auction rules and overcharge the winning bidder. These pioneering works discussed how a participant can potentially detect and learn non-credible mechanisms as we do. In contrast, our work is based on the online learning framework, where information revelation is partial and sequential in nature. 

\textbf{Learning to bid.} Our work is closely related with the line of literature on learning to bid in repeated auctions. \citet{balseiro2019contextual,han2020optimal,han2020learning,badanidiyuru2023learning,zhang2022leveraging} studied the problem of learning in repeated first-price auctions. \citet{castiglioni2022online,wang2023learning} studied no-regret learning in repeated first-price auctions with budget constraints. As for repeated second-price auctions, \citet{balseiro2019learning,balseiro2022best,chen2022dynamic} considered the bidding problem with budget constraints and \citet{feng2022online,golrezaei2021bidding} further considered return-on-spend (RoS) constraints.
All these works assume that the seller has full commitment to the announced auction rules.

\section{Problem Formulation}
\label{sec: problem formulation}

We consider the problem of online learning in repeated non-credible second-price auctions. 
We focus on a single bidder in a large population of bidders during a time horizon $T$. 
In each round $t = 1, \ldots, T$, there is an available item auctioned by a single seller. 
The bidder perceives a value $v_t\in [0, 1]$ for this item, and then submits a bid $b_t \in \bbR_+$ based on $v_t$ and all historical observations available to her. 
We denote the maximum bid of all other bidders by $d_t\in \bbR_+$. As usual, we use the bold symbol $\vv$ without subscript $t$ to denote the vector $(v_{1}, \ldots, v_{T})$; the same goes for other variables in the present paper.

We consider a \textit{stochastic} setting where $v_t$ is \textit{i.i.d.} sampled from a distribution $F$ and $d_t$ is \textit{i.i.d.} sampled from a distribution $G$. 
The latter assumption follows from the standard mean-field approximation \cite{iyer2014mean,balseiro2015repeated} and is a common practice in literature. 
The main rationale behind this assumption is that when the number of other bidders is large, on average their valuations and bidding strategies are static over time. 
Whether $G$ is known to the bidder depends on the information structure while $F$ is always unknown to the bidder.

The seller claims that he follows the rules of the second-price auction, but he actually uses a combination of the first-price auction and the second-price auction. 
For simplicity, we assume a linear model. 
The auction outcome in round $t$ is then as follows: if $b_t \geq d_t$, the bidder wins the item and pays $p_t \coloneqq \alpha_0 d_t + (1 - \alpha_0)b_t$, where $\alpha_0$ is assumed to be a fixed weight throughout the period; if $b_t < d_t$, the bidder loses the auction and pays nothing. 
Here we assume that ties are broken in favor of the bidder we concern to simplify exposition.
We only consider $\alpha_0 \in [0, 1]$ since $\alpha_0 < 0$ will be immediately detected by the winner and $\alpha_0 > 1$ will lead to lower revenue than mere second-price auctions. 
One can observe the pricing rule follows the second-price auction when $\alpha_0 = 1$, and it follows the first-price auction when $\alpha_0 = 0$. 

Let $x_{t} \coloneqq \mathbb{I}\left\{b_{t} \geq d_{t}\right\}$ be the binary variable indicating whether the bidder wins the item. 
Let $c_{t} \coloneqq x_{t}p_{t}$ be the bidder's cost and let $r_{t} \coloneqq x_{t}v_{t} - c_{t}$ be the corresponding reward. 

\paragraph{Information structure.} In this paper, we investigate three cases of prior information: \begin{enumerate*}[label={\arabic*)}]
    \item known credibility $\alpha_0$ and unknown distribution $G$;
    \item unknown credibility $\alpha_0$ and known distribution $G$;
    \item unknown credibility $\alpha_0$ and unknown distribution $G$.
\end{enumerate*}

The first two cases not only serve as warm-up analysis, but also have practical significance in their own right. In reality, a bidder may receive some additional signals beyond the learning process to construct her belief over the seller's credibility or the strategies of other bidders, e.g. the seller's reputation heard from other bidders, bidding data collected through other credible channels. Then her bidding algorithm mainly aims to learn the other part in the competing environment.

We consider two cases of information feedback:
\begin{enumerate}
    \item \textit{Bandit information feedback. } The bidder can observe the allocation $x_t$ and the cost $c_t$ at the end of each round $t$. 
    \item \textit{Full information feedback.} The bidder can observe the allocation $x_t$ and the price $p_t$ at the end of each round $t$.
\end{enumerate}
The second information feedback makes sense in non-censored auctions where the seller-side platform (SSP) is supposed to provide the minimum winning price for every bidder regardless of the outcome. If the bidder wins, a dishonest seller will overstate the minimum winning price to overcharge the winner; if the bidder loses, a dishonest seller will understate the minimum winning price to trick the bidder into raising her bids in the following rounds. The full-feedback model for simplicity assumes that these two types of deceptions are symmetric, controlled by the same parameter $\alpha_0$. Note that when $\alpha_0 = 0$, both feedback models are equivalent to the binary feedback model in first-prices auctions.

We denote the historical observations available to the bidder before submitting a bid in round $t$ by $\calH_t$. For the two cases of information feedback, we have, respectively, 
\begin{align*}
    \calH^B_t \coloneqq \left(v_{s}, x_{s}, c_{s}\right)_{s=1}^{t-1}, \ \calH^F_t \coloneqq \left(v_{s}, x_{s}, p_{s}\right)_{s=1}^{t-1}.
\end{align*}
We will omit the superscript $B$ or $F$ in the remaining of this paper when the context is clear.

\paragraph{Bidding strategy and regret.} A bidding strategy maps $(\calH_t, v_t)$ to a (possibly random) bid $b_t$ for each $t$. For a strategy $\pi$, we denote by $\calR(\pi)$ the expected performance of $\pi$, defined as follows:
\begin{align*}
    \calR(\pi) & =  \bbE^{\pi}_{\vv, \vd} \left[\sum_{t=1}^T r^{\pi}_t\right] \\ 
    & = \bbE^{\pi}_{\vv, \vd} \left[\sum_{t=1}^T \mathbb{I} \left\{b^{\pi}_{t} \geq d_{t}\right\}\left(v_t - p_t^{\pi}\right)\right],
\end{align*}
where the expectation is taken with respect to the values $\vv$, the highest competing bids $\vd$ and any possible randomness embedded in the strategy $\pi$. The expect regret of the bidder is defined to be the difference in the expected cumulative rewards of the bidder's strategy and the optimal bidding strategy, which has the perfect knowledge of $\alpha_0$, $F$ and $G$ to maximize the target:
\begin{align*}
    \mathrm{Regret}(\pi) = \max_{\pi'} \calR(\pi') - \calR(\pi).
\end{align*}
Now we look at the the optimal bidding strategy. Using the independence of $d_t$ from $v_t$, one has
\begin{align*}
    \calR(\pi) = \sum_{t=1}^T \bbE^{\pi}_{\calH_t, v_t} \left[(v_t - b_t^{\pi}) G(b_t^{\pi}) + \alpha_0\int_{0}^{b_t^{\pi}}G(y)dy \right].
\end{align*}
Let $r(v, b, \alpha) = \left(v - b\right)G(b) + \alpha\int_{0}^{b}G(y)\,dy$ and let $b^*(v, \alpha) = \arg\max_{b} r(v, b, \alpha)$ (taking the largest in a case of a tie). Then with the perfect knowledge of $\alpha_0$ and $G$, the optimal strategy in each round submits $b^*(v_t, \alpha_0)$, denoted by $b_t^*$ for short. The expect regret of strategy $\pi$ can written as
\begin{align*}
    \text{Regret}(\pi) = \bbE^{\pi}_{\vv, \vd} \left[\sum_{t=1}^T r(v_t, b_t^*, \alpha_0) - \sum_{t=1}^T r(v_t, b_t^{\pi}, \alpha_0)\right].
\end{align*}
We will omit the superscript $\pi$ in the remaining of this paper when the context is clear .

\section{Learning with Known $\alpha$ and Unknown $G$}
\label{sec: known alpha unknown G}

We start with the scenario where the credibility parameter $\alpha$ is known but the distribution $G$ of the highest competing bids is unknown. The algorithms and proofs of this section are deferred to \Cref{app: known alpha unknown G}.

Observe that when $\alpha$ reaches an endpoint of $[0, 1]$, this problem will degenerate into a bidding problem in repeated first-price or second-price auctions. Therefore, We will consider three cases separately: $\alpha_0 = 0$, $\alpha_0 = 1$, and $\alpha_0 \in (0,1)$. The following \Cref{thm: known alpha unknown G} provides a comprehensive characterization of this setting. For each case, it establishes a regret lower bound and gives an algorithm with optimal performance up to log factors. 

\begin{restatable}{theorem}{firsttheorem}
\label{thm: known alpha unknown G}
For repeated non-credible second-price auctions with known credibility $\alpha_0$, unknown distribution $G$ and bandit feedback:
\begin{enumerate}[label={(\arabic*)}]
    \item when $\alpha_0 = 1$, truthful bidding achieves no regret;
    \item when $\alpha_0 = 0$, there exists a bidding algorithm (\Cref{algo: known alpha unknown G fpa}) that achieves an $\widetilde{O}(T^{2/3})$ regret, and the lower bound on regret for this case is $\Omega(T^{2/3})$;
    \item when $\alpha_0 \in (0,1)$, there exists a bidding algorithm (\Cref{algo: known alpha unknown G main}) that achieves an $\widetilde{O}(T^{1/2})$ regret, and the lower bound on  regret for this case is $\Omega(T^{1/2})$.
\end{enumerate}
\end{restatable}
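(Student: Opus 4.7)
The price collapses to $p_t=d_t$, and the reward function $r(v,b,1)=(v-b)G(b)+\int_0^b G(y)\,dy$ is maximized at $b=v$ for any $G$ by a direct first-order analysis (equivalently, by the standard incentive-compatibility argument for second-price auctions). Hence playing $b_t=v_t$ at every round is optimal and incurs zero regret, matching the trivial $0$ lower bound.

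\paragraph{Case $\alpha_0=0$.} The auction reduces exactly to a first-price auction with binary win/lose bandit feedback, since the cost upon winning is the bidder's own known bid. Both the $\Omega(T^{2/3})$ lower bound and the $\widetilde O(T^{2/3})$ upper bound are then essentially inherited from the existing literature on online bidding in first-price auctions: a two-hypothesis construction in which $G$ is perturbed on a short sub-interval gives the lower bound, and a discretize-and-UCB algorithm on a bid grid of resolution $\Theta(T^{-1/3})$ gives the matching upper bound.

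\paragraph{Case $\alpha_0\in(0,1)$.} This is the main case. The key observation is that whenever the bidder wins round $t$ she sees $c_t=\alpha_0 d_t+(1-\alpha_0)b_t$ and, knowing $b_t$ and $\alpha_0$, can solve for $d_t$ exactly; hence every winning round yields a true i.i.d.\ sample from $G$, censored by the event $\{d_t\le b_t\}$. For the upper bound I would (i) maintain an empirical CDF $\widehat G_t$ built from recovered samples, paired with a DKW-type inequality to obtain uniform confidence bounds of width $\widetilde O(1/\sqrt{n_t})$; (ii) at each round play the optimistic maximizer on $[0,v_t]$ of a UCB surrogate for $r(v_t,\cdot,\alpha_0)$; and (iii) exploit local concavity of $r(v,\cdot,\alpha_0)$ near $b^*(v,\alpha_0)$ to convert an $\varepsilon$-error in $G$ into $O(\varepsilon^2)$ per-round regret, which summed over the horizon yields $\widetilde O(\sqrt T)$. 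For the matching $\Omega(\sqrt T)$ lower bound I would construct two distributions $G_0,G_1$ that agree outside a small window around the candidate optimal bid, upper-bound the KL divergence of the resulting observation laws round by round, and apply a standard Pinsker/Le Cam two-point argument.

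\paragraph{Main obstacle.} The hard part of Case~(3) is the interplay between censoring and exploration: samples of $d_t$ are obtained only on winning rounds, so $\widehat G_t$ is reliable precisely on the portion of the support that the bidder already probes, which a priori could trap the algorithm in a suboptimal regime. Showing that the optimistic bid self-induces adequate exploration of the relevant range, and that $b^*(v,\alpha_0)\le v$ stays inside the explored range with high probability, is the central technical step; once combined with the local-concavity-to-regret conversion, the $\widetilde O(\sqrt T)$ rate follows from standard concentration.
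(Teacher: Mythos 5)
Your cases (1) and (2) follow the same route as the paper: (1) is the standard dominant-strategy argument, and (2) defers to the first-price/binary-feedback literature with a bid grid of resolution $\Theta(T^{-1/3})$, which is exactly the paper's use of the cross-learning UCB algorithm and the dynamic-pricing lower bound (one caveat: to get a regret independent of the value distribution you need the cross-learning identity $r' = r_t + x_t(v'-v_t)$ across values, not a per-context UCB; your sketch glosses over this). Your case-(3) lower bound is also the paper's argument: a Le Cam two-point construction with perturbation $\Delta\sim T^{-1/2}$ and a per-round KL bound (\Cref{lem: known alpha unknown G main lower}).

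The genuine gap is in the case-(3) upper bound, and it is precisely the obstacle you name without closing. First, the recovered values $d_t$ are not i.i.d.\ samples from $G$: you observe $d_t$ only when $b_t \ge d_t$, and $b_t$ is chosen adaptively, so the DKW inequality does not apply to your $\widehat G_t$; one can only form an unbiased estimate of $G(x)$ from the rounds with $b_s \ge x$, and the count of such rounds is policy-dependent. Second, the claimed conversion of an $\varepsilon$ error in $G$ into $O(\varepsilon^2)$ per-round regret via ``local concavity'' has no basis at this level of generality: the theorem imposes no smoothness on $G$ (the paper's own lower-bound instances are two-point discrete distributions), so $r(v,\cdot,\alpha_0)$ need not be concave or even continuous at its maximizer. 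This second point is harmless in itself --- a uniform $\varepsilon$ error in $\widehat G$ already yields $O(\varepsilon)$ per-round regret for the plug-in bid, and $\sum_t t^{-1/2}=O(\sqrt{T})$ --- but it shows the sketched analysis is not the one that works. The paper resolves the censoring/exploration issue by a different mechanism: it casts the problem as a contextual bandit with cross-learning over values, partial ordering over bids (winning at $b_t$ reveals the counterfactual reward of every $b'\le b_t$ through the recovered $d_t$), and partial ordering over values (\Cref{lem: optimal bid monotonicity}: $b^*(v,\alpha_0)$ is non-decreasing in $v$), and then adapts the successive-elimination algorithm of \citet{han2020optimal} (\Cref{algo: known alpha unknown G main}), whose elimination schedule guarantees that surviving bids are sampled often enough and that the optimal bid is never eliminated with high probability. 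To rescue your optimism-based route you would have to prove that the optimistic bid exceeds $b^*(v_t,\alpha_0)$ with high probability in every round, so that the relevant portion of $G$ accumulates samples at a linear rate; that is exactly the step your proposal leaves open.
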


The lower bounds in \Cref{thm: known alpha unknown G} show that the hardness of this learning problem increases as $\alpha_0$ decreases, which demonstrates the impact of reduced credibility on online bidding optimization. When $\alpha_0$ deviates from $1$, truthful bidding is no longer a dominant strategy. The bidder has to learn the competitors' bids to make decisions, and thus the distribution estimation error would introduce an inevitable regret of order $\Omega(T^{1/2})$. As long as $\alpha_0 > 0$, the bidder can infer the highest competing bid $d_t$ from her payment once she wins an auction by measuring the difference between $c_t$ and $b_t$. However, when $\alpha_0$ becomes $0$, $c_t \equiv b_t$ in winning rounds provides no additional information about $d_t$. The complete loss of credibility cripples the bidder's ability to observe the competitive environment and estimate the distribution $G$, so the regret lower bound leaps from $\Omega(T^{1/2})$ to $\Omega(T^{2/3})$.

The first statement of \Cref{thm: known alpha unknown G} is trivial due to the nature of the second-price auction. 

The second case is equivalent to bidding in repeated first-price auctions with binary feedback (receiving only $x_t = 1$ or $0$), which can be modeled as a contextual bandits problem with cross learning:
\begin{itemize}
    \item \textit{Cross learning over contexts. } The bidder in round $t$ not only receives the reward $r_t$ under $(v_t, b_t)$, but also observes the rewards $r'$ under $(v', b_t)$ for every other $v'$. 
\end{itemize}
For the contextual bandits problem with cross-learning over contexts in the stochastic setting, \citet{balseiro2019contextual} proposed a UCB-based algorithm that can achieve an $O(\sqrt{KT})$ regret, where $K$ is the number of actions. Applying this algorithm to the auction setting results in a regret bound of $\widetilde{O}(\sqrt{KT}+T/K)$, where the last term comes from the discretization error and the upper bound becomes $\widetilde{O}(T^{2/3})$ with $K\sim T^{1/3}$. They also proved the regret lower bound is $\Omega(T^{2/3})$ via a reduction to the problem of dynamic pricing.

\begin{lemma}[\citet{balseiro2019contextual}]
\label{lem: known alpha unknown G fpa}
For repeated first-price auctions with binary feedback, \Cref{algo: known alpha unknown G fpa} can achieve an $\widetilde{O}(T^{2/3})$ regret, and there exists a problem instance where any algorithm must incur a regret of at least $\Omega(T^{2/3})$ regret.
\end{lemma}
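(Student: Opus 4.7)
The plan is to handle the $\alpha_0=0$ case by reducing it to a stochastic contextual bandit problem with cross-learning over contexts, which is precisely the framework of \cite{balseiro2019contextual}. First I would discretize the bid space into $K$ equally spaced points $\{b^{(1)}, \ldots, b^{(K)}\} \subset [0,1]$. Treating $v_t$ as the context and each discrete bid as an arm, the expected reward under context $v$ and arm $b^{(k)}$ is $(v - b^{(k)})\,G(b^{(k)})$. The key observation enabling cross-learning is that the binary feedback $x_t = \mathbb{I}\{b_t \geq d_t\}$ depends only on $b_t$, not on $v_t$; hence each pull of arm $b^{(k)}$ yields an unbiased Bernoulli sample of $G(b^{(k)})$ that can be used to evaluate the reward of arm $b^{(k)}$ at \emph{every} context simultaneously.

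Given this structure, \Cref{algo: known alpha unknown G fpa} maintains a UCB estimate $\hat{G}_{\mathrm{UCB}}(b^{(k)})$ of $G(b^{(k)})$ pooled across all contexts, and at round $t$ plays $\arg\max_k (v_t - b^{(k)})\,\hat{G}_{\mathrm{UCB}}(b^{(k)})$. The regret decomposes into two pieces. The bandit regret against the best discrete bid per context is bounded by the cross-learning UCB guarantee of \cite{balseiro2019contextual}, yielding $\widetilde{O}(\sqrt{KT})$; intuitively, cross-learning collapses the usual per-context $\sqrt{KT}$ dependence into a single collective $\sqrt{KT}$ because one pull informs all contexts at once. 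The discretization error is bounded via Lipschitzness of $b \mapsto (v-b)G(b)$ on $[0,1]$ and contributes $O(T/K)$. Balancing $\sqrt{KT}$ with $T/K$ at $K = \Theta(T^{1/3})$ produces the $\widetilde{O}(T^{2/3})$ upper bound.

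For the matching lower bound, I would reduce from the stochastic posted-price dynamic pricing problem with binary feedback, for which a well-known $\Omega(T^{2/3})$ lower bound exists (Kleinberg--Leighton style). The reduction fixes the value sequence to $v_t \equiv 1$ deterministically; then the per-round reward $(1 - b_t)\mathbb{I}\{b_t \geq d_t\}$ is exactly the revenue a seller earns by posting price $b_t$ to a buyer with valuation $d_t \sim G$, observing only whether a sale occurs. Any first-price bidding algorithm with binary feedback thus induces a dynamic pricing algorithm with identical regret, so the pricing lower bound transfers verbatim.

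The main obstacle I anticipate is formalizing the cross-learning UCB regret bound in the form needed here: the result in \cite{balseiro2019contextual} is typically stated for discrete context sets and bounded rewards, whereas $v_t$ is drawn from a continuous distribution $F$ on $[0,1]$. Fortunately, because the arms remain discrete and $\hat{G}_{\mathrm{UCB}}(b^{(k)})$ is context-independent, the extension is essentially immediate: the per-arm Hoeffding concentration and the standard UCB regret telescoping are unaffected, and taking an outer expectation over $v_t \sim F$ preserves the $\widetilde{O}(\sqrt{KT})$ bound. The remaining work is routine bookkeeping of Lipschitz and discretization constants and carefully tracking the logarithmic factors hidden in $\widetilde{O}(\cdot)$.
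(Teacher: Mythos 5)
Your proposal follows essentially the same route as the paper: the paper also treats the $\alpha_0=0$ case as a contextual bandit with cross-learning over contexts, invokes the $O(\sqrt{KT})$ UCB guarantee of \citet{balseiro2019contextual}, adds the $O(T/K)$ discretization error, balances at $K\sim T^{1/3}$, and obtains the $\Omega(T^{2/3})$ lower bound by the same reduction to dynamic pricing. The only cosmetic difference is that \Cref{algo: known alpha unknown G fpa} places the confidence bonus on the pooled reward estimate $\widetilde{r}_t(v_t,b^k)$ rather than on $\hat{G}(b^k)$ itself (which avoids the sign issue when $v_t<b^k$), but this does not change the argument.
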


The third case is similar to bidding in repeated first-price auctions with censored feedback, where the seller runs first-price auctions and always reveals the winner's bid to each bidder so the bidder can see the highest competing bid only if she loses the auction. \citet{han2020optimal} modeled that problem as a contextual bandits problem with cross learning, partial ordering:
\begin{itemize}
    \item \textit{Cross learning over contexts.}
    \item \textit{Partial ordering over actions. } There exists a partial order $\preceq_\calB$ over the action set $\calB$. The bidder in round $t$ not only receives the reward $r_t$ under $(v_t, b_t)$, but also observes the rewards $r'$ under $(v_t, b')$ for every other $b' \preceq_\calB b_t$. 
    \item \textit{Partial ordering over contexts. } 
    There exists a partial order $\preceq_\calV$ over the context set $\calV$ such that if $v_1 \preceq_\calV v_2$, then $b^*(v_1) \preceq_\calB b^*(v_2)$ where $b^*(v)$ is the optimal auction under context $v$.
\end{itemize}
\begin{restatable}{lemma}{knownalphaupper}[\citet{han2020optimal}]
\label{lem: known alpha unknown G main upper}
For the contextual bandits problem with cross-learning over contexts, partial ordering over auctions and contexts in the stochastic setting, there exists a bidding algorithm than achieves an $\widetilde{O}(T^{1/2})$ regret.
\end{restatable}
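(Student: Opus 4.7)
The plan is to adapt the UCB-style approach of \citet{han2020optimal} that simultaneously exploits the three structural properties in the lemma. The starting point is to partition the bid space $[0,1]$ into $K$ equally spaced grid bids $\{b^{(k)} = k/K\}_{k=0}^{K}$. Because the reward $r(v,b,\alpha_0)=(v-b)G(b)+\alpha_0\int_0^b G(y)\,dy$ is Lipschitz in $b$ with an $O(1)$ constant (uniformly in $v$), restricting the player to this grid contributes at most $O(T/K)$ to the regret. The remaining task is to design an exploration-exploitation rule on the grid and analyze its concentration error, which I would then balance against $T/K$ by choosing $K\asymp \sqrt T$.

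For the algorithmic step, maintain, for every grid bid $b^{(k)}$ and every (discretized) context $v$, an estimate $\widehat r_t(v,b^{(k)})$ with a confidence radius, and in round $t$ play $b_t=\arg\max_{b^{(k)}}\mathrm{UCB}_t(v_t,b^{(k)})$. Cross-learning over contexts means each round's observation at the chosen bid $b_t$ contributes to the estimate at $(v,b_t)$ for \emph{every} candidate context $v$, collapsing the context dimension out of the concentration bound. Partial ordering over actions means that a round with chosen bid at least $b^{(k)}$ also produces a usable observation at $b^{(k)}$, so the effective sample size at $b^{(k)}$ is $N_t(b^{(k)})=|\{s<t:b_s\ge b^{(k)}\}|$. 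Standard optimistic-algorithm bookkeeping then gives an instantaneous regret at round $t$ of order $1/\sqrt{N_t(b_t)}$.

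The regret would then be $\sum_{t=1}^T 1/\sqrt{N_t(b_t)}$, and this is where the partial ordering over contexts becomes indispensable. Monotonicity of $b^\star(v)$ in $v$ implies that the sequence of bids the algorithm plays is effectively a ``staircase'' indexed by the context rank, so once a bid level has been visited for some context, every strictly higher bid visited later adds a sample to every lower level. Using an Abel-summation or potential-function argument over the bid indices (essentially the one in \citet{han2020optimal}), one shows $\sum_t 1/\sqrt{N_t(b_t)}=\widetilde O(\sqrt T)$ independently of $K$, which combined with the $O(T/K)$ discretization error and $K=\Theta(\sqrt T)$ yields the claimed $\widetilde O(T^{1/2})$ bound.

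The main obstacle is precisely this last potential-style argument. Without partial ordering over contexts, the same calculation only yields $\widetilde O(\sqrt{KT})$, and the discretization trade-off degrades to $T^{2/3}$. Making the staircase structure rigorous requires showing that UCB-driven play, combined with the monotonic shape of the optimal policy, keeps the algorithm on a sparse set of bid levels whose sample counts grow at least linearly in the number of rounds whose contexts favored those levels. This monotonic coupling between the algorithm's bids and the context partial order is the technical heart of the proof and the step I would transcribe most carefully from \citet{han2020optimal}.
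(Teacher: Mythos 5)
This lemma is not proved in the paper at all: it is imported verbatim from \citet{han2020optimal} and used as a black box. The paper's own contribution around it is (i) verifying that the non-credible second-price setting with $\alpha_0\in(0,1)$ satisfies the three structural properties (cross-learning via \Cref{eqn: infer r_t'}, partial ordering over bids via \Cref{eqn: infer d_t}, partial ordering over values via \Cref{lem: optimal bid monotonicity}), and (ii) transcribing the algorithm as \Cref{algo: known alpha unknown G main}. So there is no in-paper proof to compare against; what can be compared is your sketch versus the algorithm the paper actually instantiates.

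Against that benchmark, your sketch has the right ingredients but two substantive problems. First, you describe an optimistic index policy that plays $\arg\max_k \mathrm{UCB}_t(v_t,b^{(k)})$, whereas the algorithm that the lemma refers to (and that \Cref{algo: known alpha unknown G main} implements) is a successive-elimination scheme that always plays the \emph{largest surviving} bid $\sup\calB_t^{m(t)}$ and enforces monotonicity by an explicit cross-context pruning step $\calB_t^m \leftarrow \{b^k\in\calB_t^m : b^k \le \min_{s>m}\sup\calB_{t+1}^s\}$. This is not cosmetic: the guarantee $N_t^m \ge \#\{s<t : m(s)\ge m\}$ — every round with a weakly higher value feeding a sample to every active bid of every lower value bin — is a consequence of always bidding the top of a nested family of active sets. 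Nothing in a UCB index rule forces the played bid to dominate all active bids of lower contexts, so your claim that "UCB-driven play, combined with the monotonic shape of the optimal policy, keeps the algorithm on a sparse set of bid levels" is asserted rather than derived, and it is exactly the property you need. Second, the step that makes the result nontrivial — converting $\sum_t N_t(b_t)^{-1/2}$ into $\widetilde O(\sqrt T)$ \emph{independently of $K$ and of the number of value bins} — is left as a pointer to the source; the naive bound from $N_t^{m}\ge\#\{s<t:m(s)\ge m\}$ only yields $\sum_m\sqrt{T_m}\le\sqrt{MT}$, so the potential/charging argument you defer is precisely where the $\sqrt{KT}\to\sqrt{T}$ improvement lives. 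Since the lemma is a citation, deferring to \citet{han2020optimal} is defensible, but as a standalone proof the proposal is incomplete at its load-bearing step and builds on an algorithm for which that step is not known to hold.
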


We carefully adapt their algorithm to our setting (\Cref{algo: known alpha unknown G main}) and verify that the third case with $\alpha_0 \in (0, 1)$ satisfies the above three properties:
\begin{itemize}
    \item \textit{Cross learning over values.} 
    The bidder can calculate $r'$ under $(v', b_t)$ by
    \begin{align}
    \label{eqn: infer r_t'}
        r' = r_t + x_t(v' - v_t).
    \end{align}
    \item \textit{Partial ordering over bids. } If the bidder wins in round $t$, she can infer the highest competing bid $d_t$ by
    \begin{align}
    \label{eqn: infer d_t}
        d_t = \left(c_t - (1-\alpha_0)b_t\right)/\alpha_0.
    \end{align}
    Therefore the reward $r'$ under $(v_t, b')$ for any other $b < b_t$ can be calculated by using the corresponding allocation $\mathbb{I}\left\{b \geq d_{t}\right\}$ and price $\alpha_0 d_t + (1-\alpha_0) b$. If the bidder loses the auction with $b_t$, she should also lose with $b' < b_t$ and the reward $r'$ is $0$.
    \item \textit{Partial ordering over values. } We have shown in the previous section that the optimal bid under value $v$ is $b^*(v, \alpha_0) = \arg\max_{b} r(v, b, \alpha_0)$. The following lemma shows it is a non-decreasing function in $v$.
\end{itemize}

\begin{restatable}{lemma}{increasing}
\label{lem: optimal bid monotonicity}
$b^*(v, \alpha) = \arg\max_{b} r(v, b, \alpha)$ (taking the largest in the case of a tie) is a non-decreasing function in both $v$ and $\alpha$.
\end{restatable}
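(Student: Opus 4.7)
The plan is to apply Topkis's monotone comparative statics theorem: if $r(v,b,\alpha)$ has increasing differences in $(b,v)$ and separately in $(b,\alpha)$, then the set $\arg\max_b r(v,b,\alpha)$ is monotone non-decreasing in $v$ and in $\alpha$ under the strong set order, and in particular the largest maximizer (the tie-breaking rule already adopted in the statement) is a non-decreasing function of each of the two parameters. Since the bid variable $b$ is a scalar, each of the two conditions reduces to checking that a single finite difference is monotone in the other parameter.

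First I would write, for any $b_1 < b_2$, the explicit identity
\[
r(v,b_2,\alpha) - r(v,b_1,\alpha) = v\bigl[G(b_2)-G(b_1)\bigr] - \bigl[b_2 G(b_2) - b_1 G(b_1)\bigr] + \alpha \int_{b_1}^{b_2} G(y)\,dy .
\]
Only the first and third terms depend on $v$ and $\alpha$ respectively. The coefficient of $v$ is $G(b_2) - G(b_1) \geq 0$ because $G$ is a cumulative distribution function and hence non-decreasing, so the difference is non-decreasing in $v$. Likewise the coefficient of $\alpha$ is $\int_{b_1}^{b_2} G(y)\,dy \geq 0$, so the difference is non-decreasing in $\alpha$. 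These are exactly the two increasing-differences properties required.

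Applying Topkis's theorem then gives that $\arg\max_b r(v,b,\alpha)$ is monotone non-decreasing in $v$ and in $\alpha$ under the strong set order, and taking the supremum of this set (which exists because for $\alpha<1$ the objective tends to $-\infty$ as $b\to\infty$, while for $\alpha=1$ any $b$ beyond the support of $G$ is equivalent) preserves the monotonicity in each coordinate. I do not expect a real obstacle here; the only delicate point is that $r$ may have multiple maximizers when $G$ has flat regions, which is precisely why the lemma specifies that the largest element of the argmax set is taken. If one prefers to avoid invoking Topkis as a black box, the same conclusion can be obtained by a direct exchange argument: if $b_1 \in \arg\max r(v_1,\cdot,\alpha_1)$ and $b_2 \in \arg\max r(v_2,\cdot,\alpha_2)$ with $v_1 \leq v_2$ and $\alpha_1 \leq \alpha_2$, then summing the two optimality inequalities and using the identity above shows that either $b_1 \leq b_2$ or the ``crossed'' bids $\min(b_1,b_2),\max(b_1,b_2)$ are also optima, which combined with the largest-selection rule yields $b^*(v_1,\alpha_1) \leq b^*(v_2,\alpha_2)$.
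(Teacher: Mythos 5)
Your proposal is correct and rests on exactly the same observation as the paper's proof: the cross-differences $(v_2-v_1)\bigl[G(b_2)-G(b_1)\bigr]$ and $(\alpha_2-\alpha_1)\int_{b_1}^{b_2}G(y)\,dy$ are non-negative, which the paper verifies by hand in a direct exchange argument rather than by citing Topkis. The ``direct exchange argument'' you sketch as an alternative at the end is precisely the proof given in the paper, so the two are essentially identical in substance.
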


Therefore, \Cref{algo: known alpha unknown G main} can achieve an $\widetilde{O}(T^{1/2})$ regret when $\alpha_0 \in (0, 1)$. For the last piece of the puzzle, we prove the following regret lower bound. Remark that \Cref{lem: known alpha unknown G main lower} holds for any $\alpha_0$, though the bound is not tight when $\alpha_0 = 0$.

\begin{restatable}{lemma}{knownalphalower}
\label{lem: known alpha unknown G main lower}
For repeated non-credible second-price auctions with known credibility $\alpha_0$, unknown distribution $G$, there exists a constant $c > 0$ such that 
$$
\inf_{\pi} \sup_{G}\mathrm{Regret}(\pi) \geq c \cdot (1-\alpha_0) \sqrt{T}, 
$$
even in the special case with $v_t \equiv 1$ and full feedback.
\end{restatable}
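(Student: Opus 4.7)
The plan is to apply Le Cam's two-point method to the special case $v_t \equiv 1$ with full feedback. The case $\alpha_0 = 1$ is trivial (the bound is zero), so assume $\alpha_0 \in [0, 1)$ and set $q_* \coloneqq 1/(2-\alpha_0) \in [1/2, 1)$. For a parameter $\epsilon > 0$ to be tuned, I consider two candidate distributions $G_+, G_-$ supported on $\{0, 1/2\}$ with $\Pr_{G_\theta}(d = 0) = q_* + \theta \epsilon$ for $\theta \in \{\pm 1\}$. Since $G_\theta$ is piecewise constant, the reward $r(1, b, \alpha_0)$ is piecewise affine with negative slope on each of $[0, 1/2)$ and $[1/2, \infty)$; its global maximum is attained either at $b = 0$ with value $q$, or at $b = 1/2$ with value $1/2 + \alpha_0 q / 2$, and these two values coincide exactly when $q = q_*$. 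Hence the optimal bids under $G_+$ and $G_-$ are $b = 0$ and $b = 1/2$ respectively, and a direct computation yields the one-sided regret structure: any bid $b \geq 1/2$ incurs per-round regret at least $\Delta \coloneqq \epsilon (2-\alpha_0)/2$ under $G_+$, while any bid $b < 1/2$ incurs per-round regret at least $\Delta$ under $G_-$.

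For the information-theoretic bound, observe that the bidder's round-$t$ observation is a deterministic function of $d_t$ given her own bid $b_t$: when $\alpha_0 > 0$ the identity $p_t = \alpha_0 d_t + (1-\alpha_0) b_t$ lets the bidder recover $d_t$ exactly, and when $\alpha_0 = 0$ the observation reduces to the indicator $\mathbb{I}\{b_t \geq d_t\}$. By the data processing inequality, the per-round KL divergence between the two hypotheses is at most
\[
    \mathrm{KL}(\mathrm{Ber}(q_+) \,\|\, \mathrm{Ber}(q_-)) \;\lesssim\; \frac{\epsilon^2}{q_*(1-q_*)} \;=\; \frac{\epsilon^2 (2-\alpha_0)^2}{1-\alpha_0}.
\]
Applying the chain rule of KL over $T$ rounds and choosing $\epsilon = c_0 \sqrt{(1-\alpha_0)/T}/(2-\alpha_0)$ for a small constant $c_0 > 0$ keeps the aggregate KL bounded, so Pinsker's inequality gives $\mathrm{TV}(P_+^T, P_-^T) \leq 1/2$.

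To conclude, I define the test statistic $N \coloneqq \#\{t \leq T : b_t \geq 1/2\}$. The per-round regret bounds yield $\mathbb{E}_+[\mathrm{Regret}] \geq \Delta \cdot \mathbb{E}_+[N]$ and $\mathbb{E}_-[\mathrm{Regret}] \geq \Delta \cdot (T - \mathbb{E}_-[N])$; summing and applying $|\mathbb{E}_+[N] - \mathbb{E}_-[N]| \leq T \cdot \mathrm{TV} \leq T/2$ gives $\mathbb{E}_+[\mathrm{Regret}] + \mathbb{E}_-[\mathrm{Regret}] \gtrsim \Delta T \gtrsim \sqrt{T(1-\alpha_0)}$, from which the claim $\sup_G \mathrm{Regret}(\pi) \geq c(1-\alpha_0)\sqrt{T}$ follows via $\sqrt{1-\alpha_0} \geq 1-\alpha_0$ on $[0, 1]$. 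I expect the main technical subtlety to be the per-round regret lower bound: a bid $b \in (0, 1/2)$ under $G_+$ has regret only $q_+(1-\alpha_0) b$, which can be much smaller than $\Delta$, so one cannot bound regret round-by-round by $\Delta$ uniformly over all bids. Using the threshold test statistic $N$ around $b = 1/2$ circumvents this because the regret structure is sharp precisely at this threshold---bids on the wrong side of $1/2$ are always punished by at least $\Delta$.
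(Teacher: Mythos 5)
Your proposal is correct and is essentially the paper's own argument: Le Cam's two-point method applied to a pair of two-atom distributions that are close in KL but whose optimal bids are separated, with the $(1-\alpha_0)$ factor emerging from the separation/indistinguishability trade-off (the paper places its atoms at $\frac{1-\alpha_0}{3-2\alpha_0}$ and $\frac{2-\alpha_0}{3-2\alpha_0}$ with probabilities $\tfrac12\pm\Delta$, whereas you fix the atoms at $0$ and $\tfrac12$ and perturb the probability around $q_*=1/(2-\alpha_0)$ --- an equivalent instantiation that in fact yields the slightly stronger $\Omega(\sqrt{(1-\alpha_0)T})$, and your threshold-statistic conclusion is interchangeable with the paper's $\max R_1+\max R_2-\max(R_1+R_2)$ plus $\int\min\{\mathrm{d}P_1,\mathrm{d}P_2\}=1-\|P_1-P_2\|_{\mathrm{TV}}$ assembly). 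One small patch: when $1-\alpha_0 < c_0^2/T$ your choice of $\epsilon$ exceeds $1-q_*$ so that $q_++\epsilon$ would leave $[0,1]$; capping $\epsilon$ at $(1-q_*)/2$ fixes this, and in that regime the total KL is still $O((1-\alpha_0)T)=O(1)$ while $\Delta T \gtrsim (1-\alpha_0)T \ge (1-\alpha_0)\sqrt{T}$, so the conclusion is unaffected.
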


\section{Learning with Unknown $\alpha$ and Known $G$}
\label{sec: unknown alpha known G}

We next consider the scenario where distribution $G$ is known but credibility $\alpha_0$ is unknown. The proofs of this section are deferred to \Cref{app: unknown alpha known G}.

\begin{algorithm}[ht]
\caption{Learning with unknown $\alpha_0$, known $G$ and bandit feedback}
\label{algo: unknown alpha known G}
    \SetKwInput{KwInit}{Initialization}
    \KwIn{Time horizon $T$; distribution $G$.}
    \KwInit{The bidder submits $b_1 = 1$.}
    \For{$t \gets 2$ \KwTo $T$}{
        The bidder receives the value $v_t \in [0, 1]$. \\
        The bidder estimates the seller's credibility by
        \begin{align}
        \label{eqn: estimate alpha unknown_alpha_known_G}
            \widetilde{\alpha}_t = \arg\min_{\alpha\in [0, 1]}\left|\sum_{s=1}^{t-1} \left(r_s - r(v_s, b_s, \alpha)\right)\right|,
        \end{align}\\
        The bidder submits $b_t = \arg\max_{b} r(v_t, b, \widetilde{\alpha}_t)$.
    }
\end{algorithm}

Our bidding algorithm for this setting is depicted in \Cref{algo: unknown alpha known G}. 
The bidder first conducts a one-round exploration to make an appropriate initialization. 
After receiving the value $v_t$ in each round $t = 2, \ldots, T$, the bidder computes $\widetilde{\alpha}_t$, which is the estimation of $\alpha_0$ based on the historical observations in the past $t-1$ rounds. 
Recall that the optimal bid $b^*_t$ shown in \Cref{sec: problem formulation} maximizes $r(v_t, b, \alpha_0)$. 
Thus, by the choice of $b_t$, if the estimator $\widetilde{\alpha}_t$ is close to $\alpha_0$, the expected reward $r(v_t, b_t, \alpha_0)$ will be close to the optimal reward $r(v_t, b_t^*, \alpha_0)$.

\Cref{eqn: estimate alpha unknown_alpha_known_G} in \Cref{algo: unknown alpha known G} aims to estimate $\alpha_0$ by fitting the observed rewards $\{r_s\}_{s=1}^{t-1}$ to the expected rewards $\{r(v_s, b_s, \alpha)\}_{s=1}^{t-1}$, which can be computed given $G$ is known. We apply the Azuma-Hoeffding inequality to obtain the following result.

\begin{restatable}{lemma}{alphaconvergence}
\label{lem: convergence of alpha unknown_alpha_known_G}
Under \Cref{algo: unknown alpha known G}, we have with probability at least $1-\delta$, $\forall t\in [T]$, 
\begin{align*}
    |\widetilde{\alpha_t} - \alpha_0| \leq w_t,
\end{align*}
where $w_t$ is given by
\begin{align*}
    w_t = \frac{2\sqrt{2(t-1)\log{(2T/\delta)}}}{\sum_{s=1}^{t-1} \int_{0}^{b_s}G(y)dy}.
\end{align*}
\end{restatable}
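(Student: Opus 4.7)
The plan is to combine a martingale concentration inequality with the fact that $r(v, b, \alpha)$ is affine in $\alpha$, so that a one-dimensional least-squares-type fit directly yields an error bound on $\widetilde{\alpha}_t$. First, I would verify that the summands $M_s \coloneqq r_s - r(v_s, b_s, \alpha_0)$ form a bounded martingale difference sequence. Conditional on $\calH_s$ and $v_s$, the bid $b_s$ is deterministic and only $d_s \sim G$ is random; integration by parts over $[0, b_s]$ then gives
\begin{align*}
\bbE[r_s \mid \calH_s, v_s] = (v_s - b_s)G(b_s) + \alpha_0 \int_0^{b_s} G(y)\,dy = r(v_s, b_s, \alpha_0).
\end{align*}
Since the algorithm plays $b_t = \arg\max_b r(v_t, b, \widetilde{\alpha}_t) \leq v_t \leq 1$ for $t \geq 2$ (by the same first-order analysis behind \Cref{lem: optimal bid monotonicity}), and the initial round satisfies $r_1 \in [-1,1]$, the increments $M_s$ lie in an interval of width at most $2$.

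Second, I would apply Azuma--Hoeffding in its range form together with a union bound over $t \in [T]$ to obtain that, with probability at least $1 - \delta$, for every $t \in [T]$,
\begin{align*}
\left|\sum_{s=1}^{t-1} M_s\right| \leq \sqrt{2(t-1)\log(2T/\delta)}.
\end{align*}

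Third, I would exploit the defining property of $\widetilde{\alpha}_t$. Since $\alpha_0 \in [0,1]$ is feasible in the optimization of \eqref{eqn: estimate alpha unknown_alpha_known_G}, optimality of $\widetilde{\alpha}_t$ gives $|\sum_{s=1}^{t-1}(r_s - r(v_s, b_s, \widetilde{\alpha}_t))| \leq |\sum_{s=1}^{t-1} M_s|$. The triangle inequality then yields
\begin{align*}
\left|\sum_{s=1}^{t-1}\bigl(r(v_s, b_s, \alpha_0) - r(v_s, b_s, \widetilde{\alpha}_t)\bigr)\right| \leq 2\sqrt{2(t-1)\log(2T/\delta)}.
\end{align*}
Because $\alpha \mapsto r(v_s, b_s, \alpha)$ is affine with slope $\int_0^{b_s} G(y)\,dy$, the left-hand side factors as $|\alpha_0 - \widetilde{\alpha}_t| \cdot \sum_{s=1}^{t-1} \int_0^{b_s} G(y)\,dy$, and dividing through delivers the claimed bound $w_t$.

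The one step that demands real care is the martingale verification: the integration-by-parts identity linking $\bbE[r_s]$ to $r(v_s, b_s, \alpha_0)$ must be done cleanly, and one must argue that the algorithm's bids stay in $[0,1]$ so that $|M_s|$ has a deterministic range independent of the (unknown) distribution $G$. Everything else is an algebraic consequence of linearity in $\alpha$ plus the optimality of the minimizer, which is why a simple empirical-fit estimator suffices here.
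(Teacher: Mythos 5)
Your proposal matches the paper's proof essentially step for step: both establish $\bbE[r_s \mid \calH_s, v_s] = r(v_s, b_s, \alpha_0)$, apply Azuma--Hoeffding with a union bound over $t \in [T]$, invoke the optimality of $\widetilde{\alpha}_t$ in the defining minimization together with the triangle inequality, and use the affinity of $r$ in $\alpha$ to divide out $\sum_{s=1}^{t-1}\int_0^{b_s}G(y)\,dy$. The only difference is that you explicitly verify the boundedness of the martingale increments (which the paper leaves implicit), so the proposal is correct and follows the same route.
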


For technical purpose, we make the following assumption.

\begin{assumption}
\label{asm: unknown alpha known G}
    $G$ is twice differentiable and log-concave with density function $g$. There exist positive constants $B_1, B_2$ such that $B_1 \leq g(x) \leq B_2$ for $x\in [0, 1]$.
\end{assumption}

The CDFs of many common distributions, such as gamma distributions, Gaussian distributions and uniform distributions, are all log-concave. The existence of positive bounds on the density function is also a standard and common assumption in various learning problems.

\begin{restatable}{theorem}{secondtheorem}
\label{thm: unknown alpha known G}
Suppose that \Cref{asm: unknown alpha known G} holds. For repeated non-credible second-price auctions with unknown credibility $\alpha_0$, known distribution $G$ and bandit feedback, there exists a bidding algorithm (\Cref{algo: unknown alpha known G}) that achieves an $O(\log^2{T})$ regret, and any algorithm must incur at least a constant regret.
\end{restatable}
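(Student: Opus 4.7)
The plan is to turn the estimation guarantee of \Cref{lem: convergence of alpha unknown_alpha_known_G} into a per-round regret bound of order $w_t^2$, and then to show that $w_t = O(\sqrt{(\log T)/t})$ after an $O(\log T)$ burn-in. Under \Cref{asm: unknown alpha known G}, I would establish two smoothness facts on $r(v, b, \alpha) = (v-b)G(b) + \alpha\int_{0}^{b} G(y)\,dy$. First, the first-order condition $v = b + (1-\alpha)G(b)/g(b)$ together with log-concavity of $G$ (which gives $g' G \le g^2$) yields $\partial_b^{2} r(v, b^*(v,\alpha), \alpha) \le -g(b^*) \le -B_1$, so $r(v, \cdot, \alpha)$ is locally strongly concave at $b^*$ with modulus at least $B_1$, while $C^2$-smoothness of $G$ on $[0,1]$ makes $\partial_b^2 r$ uniformly bounded. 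Second, implicit differentiation of the FOC in $\alpha$ gives $|\partial_\alpha b^*(v,\alpha)| \le B_2/B_1$ uniformly on $[0,1]^2$. Together these yield the quadratic per-round regret bound
\begin{equation*}
    r(v_t, b_t^*, \alpha_0) - r(v_t, b_t, \alpha_0) \;\lesssim\; (b_t - b_t^*)^2 \;\lesssim\; (\widetilde{\alpha}_t - \alpha_0)^2 \;\le\; w_t^2
\end{equation*}
on the $1-\delta$ event of \Cref{lem: convergence of alpha unknown_alpha_known_G}.

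\textbf{Growth of the denominator.} To turn this into a summable bound, I would show that the denominator $\sum_{s=1}^{t-1} \int_0^{b_s} G(y)\,dy$ in $w_t$ grows linearly in $t$. The same FOC and $g \ge B_1$ force $v \le b^*(v, \widetilde{\alpha})(1 + B_2/B_1)$ for every $\widetilde{\alpha} \in [0,1]$, so $b_t \ge c\, v_t$ with $c = B_1/(B_1+B_2)$, and hence $\int_0^{b_t} G(y)\,dy \ge (B_1/2) b_t^2 \ge \gamma_1 v_t^2$. Since $v_t \sim F$ is independent of $\calH_t$, $\bbE\bigl[\int_0^{b_t} G(y)\,dy \mid \calH_t\bigr] \ge \gamma := \gamma_1 \bbE_F[v^2]$, which is strictly positive unless $F$ is a point mass at $0$ (a case in which the regret is trivially zero). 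An Azuma--Hoeffding bound on the associated bounded-martingale-difference sequence then yields $\sum_{s=1}^{t-1} \int_0^{b_s} G(y)\,dy \ge (\gamma/2)(t-1)$ for all $t \ge T_0 = O((\log T)/\gamma^2)$ with probability $1-\delta$. Plugging this into \Cref{lem: convergence of alpha unknown_alpha_known_G} and taking $\delta = 1/T$ gives $w_t^2 = O((\log T)/t)$ for $t \ge T_0$, so
\begin{equation*}
    \mathrm{Regret} \;\le\; O(T_0) + \sum_{t=T_0+1}^{T} O\!\bigl((\log T)/t\bigr) \;=\; O(\log^2 T).
\end{equation*}

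\textbf{Lower bound and main obstacle.} For the constant lower bound, fix any non-degenerate $F$ and any $G$ satisfying \Cref{asm: unknown alpha known G}, and pick credibilities $\alpha_1 \ne \alpha_2$ in $(0,1)$. The FOC shows $b^*(v, \alpha_1) \ne b^*(v, \alpha_2)$ for every $v > 0$, so these two functions differ on a set of positive $F$-measure. In round $1$ the algorithm's bid is a (possibly randomized) function of $v_1$ alone and carries no information about $\alpha_0$, hence averaging the regret over $\alpha_0 \in \{\alpha_1, \alpha_2\}$ gives
\begin{equation*}
    \sup_{\alpha_0 \in \{\alpha_1,\alpha_2\}} \mathrm{Regret}(\pi) \;\ge\; \tfrac{1}{2}\,\bbE_{v_1}\!\left[\max_b r(v_1, b, \alpha_1) + \max_b r(v_1, b, \alpha_2) - \max_b\bigl(r(v_1, b, \alpha_1) + r(v_1, b, \alpha_2)\bigr)\right],
\end{equation*}
which is a strictly positive constant independent of $T$ because the two individual $\arg\max$es differ. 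The part of the argument I expect to be most delicate is the denominator-growth step: the bound $b_t \ge c v_t$ must hold \emph{uniformly} in $\widetilde{\alpha}_t \in [0,1]$, since the estimates are unconstrained in early rounds, and the martingale concentration must remain useful even though $v_t$ is only bounded rather than bounded away from zero.
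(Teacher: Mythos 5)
Your proposal follows the same skeleton as the paper's proof: convert the guarantee $|\widetilde{\alpha}_t-\alpha_0|\le w_t$ of \Cref{lem: convergence of alpha unknown_alpha_known_G} into a per-round regret of order $w_t^2$, show that the denominator $\sum_{s<t}\int_0^{b_s}G(y)\,dy$ grows linearly via Azuma--Hoeffding after an $O(\log T)$ burn-in, sum $w_t^2\sim(\log T)/t$ to get $O(\log^2 T)$, and obtain the constant lower bound from the first round, where the bid carries no information about $\alpha_0$ and the two distinct maximizers force a positive averaged instantaneous regret. Two steps genuinely diverge. (i) For the quadratic per-round bound the paper never Taylor-expands $r$ in $b$: it uses the exact identity $r(v,b,\alpha_0)=r(v,b,\widetilde{\alpha}_t)-(\widetilde{\alpha}_t-\alpha_0)\int_0^{b}G(y)\,dy$, invokes the optimality of $b_t$ for $\widetilde{\alpha}_t$, and bounds the residual $(\widetilde{\alpha}_t-\alpha_0)\int_{b_t}^{b_t^*}G(y)\,dy$ by $\tfrac{1}{B_1}(\widetilde{\alpha}_t-\alpha_0)^2$ using only $G\le 1$ and the $1/B_1$-Lipschitzness of $b^*(v,\cdot)$ in $\alpha$ (\Cref{lem: partial derivative upper bound}). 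Your route through a uniform bound on $\partial_b^2 r$ requires $|g'|$ to be bounded, which \Cref{asm: unknown alpha known G} does not literally supply (twice-differentiability of $G$ does not bound $g'$ on $[0,1]$; that condition is only added later in \Cref{asm: unknown alpha unknown G 2}), and requires the first-order condition to hold at $b_t^*$, i.e.\ interiority of the maximizer; both are avoidable costs, though under the stated assumption the maximizer is indeed interior for $v>0$ so the argument goes through in spirit. (ii) For the denominator growth, the paper lower-bounds $b_s\ge b^*(v_s,0)$ via monotonicity in $\alpha$ (\Cref{lem: optimal bid monotonicity}), sets $C_0=\bbE_v[\int_0^{b^*(v,0)}G(y)\,dy]$, and disposes of the case $C_0=0$ with a separate lemma (\Cref{lem: C_0 constant unknown_alpha_known_G}) showing the regret is then identically zero; your derivation $b_t\ge cv_t$ with $\gamma=\gamma_1\bbE_F[v^2]$ makes the constant quantitative and isolates the only degenerate case ($F$ a point mass at $0$) directly, which is arguably cleaner but again leans on the interior first-order condition. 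Both routes reach the same bounds; the paper's is more robust to boundary maximizers and to the literal smoothness hypothesis, while yours yields explicit constants.
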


Remark that due to the additional assumption, we cannot actually draw the conclusion that estimating $\alpha$ is generally easier than estimating $G$ by comparing two lower bounds in \Cref{thm: unknown alpha known G} and \Cref{thm: known alpha unknown G}. For example, the two-point method used in the proof of \Cref{lem: known alpha unknown G main lower} constructs two discrete $G$ distributions to show no bidding algorithm can obtain low regret simultaneously under both distributions, while \Cref{asm: unknown alpha known G} has ruled out such bad cases.

A key step in the proof of \Cref{thm: unknown alpha known G} involves showing that the reward per round obtained by the bidder is close to the reward under optimal bid with high probability. We have with probability at least $1-\delta$, $\forall t$, 
\begin{align*}
    r(v_t, b_t^*, \alpha_0) - r(v_t, b_t, \alpha_0) \leq w_t^2/B_1.
\end{align*}
Then the regret upper bound can be established by showing $w_t \sim \sqrt{\log T/t}$ and summing up through $1$ to $T$. It is also worth discussing the robustness of this result. Even without \Cref{asm: unknown alpha known G}, we can get
\begin{align*}
    r(v_t, b_t^*, \alpha_0) - r(v_t, b_t, \alpha_0) \leq w_t.
\end{align*}
And with a weaker continuity condition, it still holds that $w_t \sim \sqrt{\log T/t}$.

\begin{restatable}{corollary}{secondcorollary}
\label{cor: unknown alpha known G}
Suppose that $G$ is continuous. For repeated non-credible second-price auctions with unknown credibility $\alpha_0$, known distribution $G$ and bandit feedback, \Cref{algo: unknown alpha known G} can achieve an $\widetilde{O}(\sqrt{T})$ regret.
\end{restatable}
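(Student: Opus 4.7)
The plan is to adapt the proof of \Cref{thm: unknown alpha known G}, replacing the second-order per-round bound (which invoked the density lower bound $B_1$ of \Cref{asm: unknown alpha known G}) by a first-order bound in $|\widetilde{\alpha}_t - \alpha_0|$, while still establishing $w_t = \widetilde{O}(1/\sqrt{t})$ under mere continuity of $G$.

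First I would retain \Cref{lem: convergence of alpha unknown_alpha_known_G}, whose Azuma--Hoeffding derivation makes no use of \Cref{asm: unknown alpha known G} and therefore still yields $|\widetilde{\alpha}_t - \alpha_0| \le w_t$ for all $t$ with probability at least $1-\delta$. Because $r(v,b,\alpha)$ is affine in $\alpha$ with slope $\int_0^b G(y)\,dy \in [0,1]$, adding and subtracting $r(v_t,\cdot,\widetilde{\alpha}_t)$ and using the optimality of $b_t$ for $\widetilde{\alpha}_t$ to kill the cross term gives the first-order per-round bound
\[
r(v_t, b_t^*, \alpha_0) - r(v_t, b_t, \alpha_0) \;\le\; 2\,|\widetilde{\alpha}_t - \alpha_0| \;\le\; 2 w_t.
\]

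The central step is to show that the denominator $S_t := \sum_{s=1}^{t-1}\int_0^{b_s}G(y)\,dy$ of $w_t$ grows linearly in $t$ without a density lower bound. Here I would exploit \Cref{lem: optimal bid monotonicity}: since $b_s = b^*(v_s,\widetilde{\alpha}_s)$ is non-decreasing in $\alpha$, we have $b_s \ge b^*(v_s,0)$, so the random variables $X_s := \int_0^{b^*(v_s,0)} G(y)\,dy$ form an i.i.d.\ sequence in $[0,1]$ (depending only on $v_s$ and on the known $G$) that pointwise lower bounds the summands of $S_t$. Let $\mu_0 := \bbE_{v\sim F}[X_1]$. Whenever $\mu_0 > 0$, Hoeffding's inequality together with a union bound over $t \in [T]$ gives $S_t \ge (t-1)\mu_0/2$ for every $t \ge t_0 := O(\mu_0^{-2}\log T)$ with probability $1 - 1/T$, and therefore $w_t = O(\sqrt{\log T / t})$ on this event. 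In the degenerate case $\mu_0 = 0$, continuity of $G$ forces $v_s = 0$ almost surely under $F$, so only the initialization round $b_1 = 1$ contributes any regret, and the bound is $O(1)$.

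Summing the per-round bound then yields the advertised rate: the burn-in regime $t < t_0$ contributes $O(t_0) = \widetilde{O}(1)$ via the trivial bound $r(v_t,b_t^*,\alpha_0) - r(v_t,b_t,\alpha_0) \le 1$, while the tail $\sum_{t \ge t_0} O(\sqrt{\log T/t}) = \widetilde{O}(\sqrt{T})$ handles the rest. The main obstacle I expect is precisely the denominator lower bound in the central step: without $g \ge B_1$ one cannot control $\int_0^{b_s}G(y)\,dy$ term-by-term, and because \Cref{algo: unknown alpha known G} is greedy, each $b_s$ already depends on the noisy estimate $\widetilde{\alpha}_s$. The monotonicity trick of comparing $b_s$ downward to the $\widetilde{\alpha}$-free benchmark $b^*(v_s,0)$ is what breaks this circular dependence and reduces matters to a clean i.i.d.\ law-of-large-numbers argument.
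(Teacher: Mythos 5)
Your main argument mirrors the paper's own proof almost exactly: the paper likewise replaces the second-order bound by the first-order estimate $r(v_t,b_t^*,\alpha_0)-r(v_t,b_t,\alpha_0)\le w_t$ (using $\int_{b_t}^{b_t^*}G(y)\,dy\le 1$ in place of \Cref{lem: partial derivative upper bound}), lower-bounds the denominator $\sum_{s<t}\int_0^{b_s}G(y)\,dy$ by comparing $b_s=b^*(v_s,\widetilde{\alpha}_s)$ downward to $b^*(v_s,0)$ via \Cref{lem: optimal bid monotonicity}, applies Azuma--Hoeffding to get $S_t\ge C_0(t-1)-\sqrt{2(t-1)\log(T/\delta)}$, and sums $\sum_t w_t=\widetilde{O}(\sqrt{T})$. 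Your factor of $2$ in the per-round bound is immaterial.

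The one genuine flaw is your treatment of the degenerate case $\mu_0=0$. It is not true that $\mu_0=0$ forces $v_s=0$ almost surely: take $G$ supported on $[1/2,1]$ (say $G(y)=2(y-1/2)$ there) and $F$ concentrated at $v=1/2$; then $b^*(v,0)=1/2$, $\int_0^{1/2}G(y)\,dy=0$, so $\mu_0=0$ while $v\neq 0$. What $\mu_0=0$ actually implies (using continuity of $G$ and the largest-maximizer tie-breaking, which gives $b^*(v,0)\ge v$) is that $G\equiv 0$ on $[0,F^{-1}(1)]$, i.e.\ the bidder's values lie below the support of the competing bids. The paper devotes \Cref{lem: C_0 constant unknown_alpha_known_G} to this case and shows that then $r(v,b,\alpha)\le 0$ for every $b$ and every $\alpha$, with the benchmark reward equal to $0$, so the regret is in fact zero --- but this requires checking that the algorithm's bid $\arg\max_b r(v_t,b,\widetilde{\alpha}_t)$ still attains reward $0$ under the true $\alpha_0$, not merely that $v_t$ is small. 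Your shortcut ``only the initialization round contributes regret'' does not follow from your (false) premise and leaves this edge case unproved; the rest of the argument stands.
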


An intuition on why the proof of the regret upper bound may fail under some discontinuous distributions is given in \Cref{eg: bad example unknown_alpha_known_G}. In spite of this, we conjecture that \Cref{algo: unknown alpha known G} can also guarantee a lower regret in the case with discontinuous $G$ and we leave that as a future direction.
\section{Learning with Unknown $\alpha$ and Unknown $G$}
\label{sec: unknown alpha unknown G}

\paragraph{Bandit feedback.} For the last scenario with both $\alpha$ and $G$ unknown, we first consider bandit feedback. Although the seller's credibility is unknown, this case still satisfies the cross-learning property over values, i.e., \Cref{eqn: infer r_t'} holds. Thus, \Cref{algo: known alpha unknown G fpa}, which actually does not use the value of $\alpha_0$, can still work in this case and achieve an $\widetilde{O}(T^{2/3})$ regret. The regret lower bound also directly follows the third statement of \Cref{thm: known alpha unknown G} since any bidding strategy cannot obtain a better regret guarantee than $O(T^{2/3})$ when $\alpha_0 = 0$. Hence, we have the following result. 

\begin{corollary}
\label{cor: unknown alpha unknown G}
For repeated non-credible second-price auctions with unknown credibility $\alpha_0$, unknown distribution $G$ and bandit feedback, there exists a bidding algorithm (\Cref{algo: known alpha unknown G fpa}) that achieves an $\widetilde{O}(T^{2/3})$ regret, and the lower bound on regret for this problem is $\Omega(T^{2/3})$.
\end{corollary}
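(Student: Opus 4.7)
The plan is to obtain both bounds as immediate consequences of results already established in \Cref{sec: known alpha unknown G} for the known-$\alpha_0$ setting. The upper bound will come from running \Cref{algo: known alpha unknown G fpa} (the first-price-style bandit algorithm originally designed for $\alpha_0 = 0$) and observing that nowhere in its specification or its analysis is the value of $\alpha_0$ actually used, while the lower bound will come from a one-line reduction from the $\alpha_0 = 0$ sub-family, for which \Cref{lem: known alpha unknown G fpa} already supplies an $\Omega(T^{2/3})$ hard instance.

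For the upper bound, the first step I would take is to re-examine the cross-learning identity in \Cref{eqn: infer r_t'}. On a winning round the reward the bidder would have obtained under any alternative value $v'$ while keeping the bid $b_t$ fixed is $v' - p_t = r_t + (v' - v_t)$; on a losing round it is $0 = r_t$. In both cases the manipulated price $p_t$, and hence $\alpha_0$, cancels, so the identity $r' = r_t + x_t(v' - v_t)$ continues to hold without any dependence on $\alpha_0$. Because \Cref{algo: known alpha unknown G fpa} (the UCB-on-discretized-bids-with-cross-learning scheme borrowed from \citet{balseiro2019contextual}) only ever consults $r_t$ and this cross-learning formula, its execution and the regret bookkeeping in the proof of \Cref{lem: known alpha unknown G fpa} carry over verbatim. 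Balancing the $\widetilde{O}(\sqrt{KT})$ bandit term against the $O(T/K)$ discretization loss at $K \sim T^{1/3}$ yields the claimed $\widetilde{O}(T^{2/3})$ bound.

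For the lower bound, I would invoke a direct reduction: the family of instances with unknown $\alpha_0$ and unknown $G$ contains, as a sub-family, every instance with $\alpha_0 = 0$ and unknown $G$. Since the benchmark $\max_{\pi'}\calR(\pi')$ in the definition of regret is already taken over clairvoyant strategies that know $\alpha_0$ and $G$, the benchmark on this sub-family coincides exactly with the benchmark in the $\alpha_0 = 0$ known-$\alpha_0$ case. Any algorithm that is agnostic to $\alpha_0$ must in particular do well on every fixed $\alpha_0$, so its worst-case regret over the unknown-$\alpha_0$, unknown-$G$ family is at least the minimax regret for the $\alpha_0 = 0$ sub-problem, which \Cref{lem: known alpha unknown G fpa} lower-bounds by $\Omega(T^{2/3})$.

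The main obstacle here is really only a sanity check rather than a technical difficulty: one must confirm that the internal steps of \Cref{algo: known alpha unknown G fpa} (computing imagined rewards under other contexts, updating UCB indices, and selecting bids) never reference $\alpha_0$ and never require any observation beyond the bandit tuple $(v_t, x_t, c_t)$ that the bidder actually receives. Once that is verified, no new concentration inequalities or regret calculations are needed and the corollary follows.
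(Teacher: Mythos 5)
Your proposal is correct and follows essentially the same route as the paper: the upper bound comes from observing that \Cref{algo: known alpha unknown G fpa} never consults $\alpha_0$ and that the cross-learning identity \eqref{eqn: infer r_t'} is $\alpha_0$-free, so \Cref{lem: known alpha unknown G fpa} applies verbatim, and the lower bound comes from embedding the $\alpha_0=0$ hard instances (whose benchmark coincides with the clairvoyant benchmark here). The only difference is cosmetic: you spell out the reduction and the cancellation of $p_t$ in the cross-learning formula slightly more explicitly than the paper does.
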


In spite of the sublinear regret, the result of \Cref{cor: unknown alpha unknown G} is not satisfactory. \Cref{algo: known alpha unknown G fpa} only falls into the category of usual UCB policies with cross learning, but does not make full use of the properties of this auction setting. In fact, it treats the seller's mechanism as a black box without really estimating the seller's credibility or other bidder's strategies.

\paragraph{Full feedback.} With the above in mind, we explore whether we can make any improvements with richer feedback and more meticulous estimation of $\alpha_0$ and $G$. In the full feedback model, $p_t = \alpha_0 d_t + (1-\alpha_0)b_t$ is always observable, which would intuitively help our estimation. Nevertheless, when $\alpha_0 = 0$, binary feedback in first-prices auctions still results in the $\Omega(T^{2/3})$ lower bound. Thus, in what follows we exclude the extreme case of $\alpha=0$ when it is impossible to achieves a better bound than $O(T^{2/3})$. Note that sellers in reality often do not set $\alpha_0 = 0$ due to concerns about reputation or cheating costs \cite{mcadams2007pays}. This assumption is also consistent with the recent empirical findings by \citet{dianat2021credibility} that although sellers in non-credible second-price auctions often overcharge, they typically do not use the rules of the first-price auction to maximize revenue.

\begin{assumption}
\label{asm: unknown alpha unknown G}
There exists a constant $\underline{\alpha} > 0$ such that $\alpha_0 \in [\underline{\alpha}, 1]$.
\end{assumption}

We also make a slightly stronger assumption over $G$.
\begin{assumption}
\label{asm: unknown alpha unknown G 2}
    $G$ is twice differentiable with log-concave density function $g$. There exist positive constants $B_1, B_2, B_3, W$ such that $B_1 < g(x) < B_2$ and $|g'(x)| \leq B_3$ for $x\in [0, 1 + W]$.
\end{assumption}

\begin{algorithm}
\caption{Learning with unknown $\alpha_0$, unknown $G$ and full feedback}
\label{algo: unknown alpha unknown G}
    \SetKwInput{KwInit}{Initialization}
    \KwIn{Time horizon $T$.}
    \For{$t\in \Gamma_1$}{
    The bidder receives the value $v_t \in [0, 1]$. \\
    The bidder submits a bid $b_t = 1$.
    }
    Estimate $\alpha_0$ by using $\widetilde{\alpha}_1$, which is computed by,
    \begin{align}
        \widetilde{\alpha}_1 = \arg\min_{\alpha \in [\underline{\alpha}, 1]} \calL_1(\alpha),
    \end{align}
    where $\calL_1(\alpha)$ is defined in \Cref{eqn: MLE alpha}.

    The bidder computes $\hat{\Psi}(\cdot; \widetilde{\alpha}_1)$ by
    \begin{align}
        \hat{\Psi}_1(\cdot,\widetilde{\alpha}_1) \quad & \nonumber \\
        = \mathop{\arg\max}\limits_{\Psi \, is \, concave} & \frac{1}{T_s} \sum_{t \in \Gamma_s}\Psi\left(\frac{p_t - (1 - \widetilde{\alpha}_1)b_t}{\widetilde{\alpha}_1}\right) \nonumber \\
        - & \int \exp(\Psi(y;\widetilde{\alpha}_1))dy.
    \end{align}\\
    The bidder estimates $G$ by
    \begin{align}
        \hat{G}_1(\cdot; \widetilde{\alpha}_1) = \int \exp(\hat{\Psi}_1(y; \widetilde{\alpha}_1))dy.
    \end{align}
    \For{$s=2,3,\cdots, S$}{
    \For{$t\in \Gamma_s$}{
        The bidder receives the value $v_t \in [0,1]$.\\
        The bidder submits a bid $b_t$, computed by,
        \begin{align}
            b_t = \arg\max_b \Big[ & \left(v_t - b\right)\hat{G}_{s-1}(b) \nonumber \\
            + \widetilde{\alpha}_{s-1} & \int_{0}^{b}\hat{G}_{s-1}(y)\,dy \Big],
        \end{align}\\
        }
        The bidder updates the estimator for $\alpha_0$ in episode $s$ by using $\widetilde{\alpha}_s$, which is computed by,
        \begin{align}
            \widetilde{\alpha}_s = \arg\min_{\alpha \in [\underline{\alpha}, 1]} \calL_s(\alpha).
        \end{align}
        where $\calL_s(\alpha)$ is defined in \Cref{eqn: MLE alpha}.
        The bidder updates $\hat{\Psi}(\cdot; \widetilde{\alpha}_1)$ by
        \begin{align}
            \hat{\Psi}_s(\cdot,\widetilde{\alpha}_s) \quad & \nonumber \\
            = \mathop{\arg\max}\limits_{\Psi \, is \, concave} & \frac{1}{T_s} \sum_{t \in \Gamma_s}\Psi\left(\frac{p_t - (1 - \widetilde{\alpha}_s)b_t}{\widetilde{\alpha}_s}\right) \nonumber \\
            - & \int \exp(\Psi(y;\widetilde{\alpha}_s))dy.
        \end{align}\\
        The bidder updates the estimation of $G$ by 
        \begin{align}
            \hat{G}_s(\cdot; \widetilde{\alpha}_s) = \int \exp(\hat{\Psi}_1(y; \widetilde{\alpha}_s))dy.
        \end{align}

    }
\end{algorithm}

Our bidding algorithm for full feedback is presented in \Cref{algo: unknown alpha unknown G}. It runs in an episodic manner, similar to \citet{cesa2014regret},\citet{javanmard2019dynamic}, \citet{badanidiyuru2023learning}. During a time horizon $T$, the bidding algorithm is divided into $S$ episodes, each containing $T_s = T^{1-2^{-s}}$ time steps.
Denote $\Gamma_s$ be the time steps in stage $s$, s.t. $|\Gamma_s| = T_s$. For any time step $t$ in the first episode, we set $b_t = 1$ for a proper initialization. For any time step $t$ in episode $s$($s\geq 2$), we use the estimated parameter $\widetilde{\alpha}_{s-1}$ and distribution $\hat{G}_{s-1}$ in the $(s-1)$-th episode to set the bid
\begin{align*}
    b_t = \arg\max_b \left(v_t - b\right)\hat{G}_{s-1}(b) + \widetilde{\alpha}_{s-1}\int_{0}^{b}\hat{G}_{s-1}(y)\,dy,
\end{align*}
and only update these estimators at the end of episode $s$ by using the data observed in episode $s$. 

The main difficulty is how to update the estimators of $G$ and $\alpha_0$ in each episode. Recall that when G is known, the estimation step (\Cref{eqn: estimate alpha unknown_alpha_known_G}) in \Cref{algo: unknown alpha known G} is essentially similar to using the maximum likelihood estimation (MLE) method to find the most probable $\alpha$. However, without the knowledge of $G$, the bidder cannot directly estimate $\alpha_0$ by matching observed rewards to expected rewards. To handle this challenge, we combine the non-parametric log-concave density estimator and MLE method, to learn $\alpha_0$ and $G$ simultaneously.

We first introduce the non-parametric estimator of log-concave density function $g$, which is adopted from \citet{dumbgen2009maximum}. In each episode s, given
realized $p_t, t \in \Gamma_s$, \Cref{algo: unknown alpha unknown G} for any parameter $\alpha$ gives an estimator $\hat{g}_s(\cdot,\alpha)$, 
\begin{align}
    \hat{g}_s(\cdot,\alpha) = \mathop{\arg\max}\limits_{\text{$g$ is log-concave}} \frac{1}{T_s} & \sum_{t \in \Gamma_s}\log g\left(\frac{p_t - (1 - \alpha)b_t}{\alpha}\right) \nonumber \\ 
    & - \int g(y)dy.
\end{align}
In this work, we restrict the function class of $\hat{g}_s(\cdot; \alpha)$ for any $\alpha$ and $s$ as below,
\begin{align*}
    \mathcal{P} = \big\{p: p(z) \leq B_2, \forall z \in [0, 1 + W], \int p(z)dz = 1\big\}.
\end{align*}

It is \textit{w.l.o.g.} to re-parameterize $g(y) = \exp(\Psi(y))$, where $\Psi(y)$ is a concave function \textit{w.r.t} $y$. Then it is equivalent to get an estimator $\hat{\Psi}_s(\cdot,\alpha)$, 
\begin{align}
    \hat{\Psi}_s(\cdot,\alpha) = \mathop{\arg\max}\limits_{\text{$\Psi$ is concave}} \frac{1}{T_s} & \sum_{t \in \Gamma_s}\Psi\left(\frac{p_t - (1 - \alpha)b_t}{\alpha}\right) \nonumber \\
    & - \int \exp(\Psi(y;\alpha))dy.
\end{align}

Let $\hat{G}_s(y;\alpha) = \int_{0}^{y} \hat{g}_s(z,\alpha)dz = \int_{0}^{y} \exp(\hat{\Psi}_s(z;\alpha))dz$ be the estimated empirical distribution using the above estimator. Let $\bbG_s$ be the empirical distribution of samples $\{d_t\}_{t \in \Gamma_s}$ in episode $s$ such that $\mathbb{G}_s(y) = \frac{1}{T_s}\sum_{t \in \Gamma_s}\mathbb{I}\{d_t \leq y\}$. \citet{dumbgen2009maximum} proved the following result.

\begin{lemma}[\citet{dumbgen2009maximum}]
\label{lem: C.1}
The optimizer $\hat{\Psi}_s(\cdot,\alpha_0)$ exists and is unique. For any $d_t = (p_t - (1 - \alpha_0)b_t)/\alpha_0, t \in \Gamma_s$, $\mathbb{G}_s(d_t) - \frac{1}{T_s} \leq \hat{G}_s(d_t;\alpha_0) \leq \mathbb{G}_s(d_t)$.
\end{lemma}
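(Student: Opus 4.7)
The plan is to split the lemma into two independent claims and verify each via standard tools from log-concave density MLE theory.

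For existence and uniqueness, the starting point is that $L(\Psi) = \frac{1}{T_s}\sum_{t\in\Gamma_s}\Psi(d_t) - \int e^{\Psi(y)}\,dy$ is concave on the cone of concave functions, with the exponential penalty strictly convex as a functional. The crucial structural reduction shows that any optimizer is piecewise affine with breakpoints only at the ordered samples $d_{(1)} < \cdots < d_{(T_s)}$: on any sample-free interval, replacing a concave candidate by the affine interpolation through its endpoint values preserves both concavity and the discrete data-fit term, while Jensen applied to $\exp$ strictly decreases $\int e^{\Psi}$ unless the candidate is already affine there. Outside $[d_{(1)}, d_{(T_s)}]$ one optimally sends $\Psi\to-\infty$, so $\hat{\Psi}_s$ has compact support. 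The problem thus collapses to a finite-dimensional concave program on the value vector $(\Psi(d_{(i)}))_i$ under discrete slope-monotonicity constraints; coercivity plus strict concavity yield a unique maximizer.

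For the pointwise sandwich on $\hat{G}_s(d_{(k)};\alpha_0)$, I would write the KKT stationarity conditions of this reduced program in the piecewise-linear tent basis $\{T_j\}$, where $T_j$ is the continuous piecewise-affine function with $T_j(d_{(i)}) = \mathbb{I}\{i=j\}$. Stationarity gives $\tfrac{1}{T_s} - \int T_k(y)\,\hat{g}_s(y;\alpha_0)\,dy = \lambda_k$, where $\lambda_k$ aggregates Lagrange-multiplier contributions from concavity constraints active near knot $k$. Summing these equations for $j=1,\ldots,k$ and using the telescoping identity
\begin{align*}
    \sum_{j=1}^k T_j(y) = \mathbb{I}\{d_{(1)} \leq y \leq d_{(k)}\} + \tfrac{d_{(k+1)} - y}{d_{(k+1)} - d_{(k)}}\,\mathbb{I}\{d_{(k)} \leq y \leq d_{(k+1)}\}
\end{align*}
produces
\begin{align*}
    \tfrac{k}{T_s} = \hat{G}_s(d_{(k)};\alpha_0) + \tfrac{1}{d_{(k+1)} - d_{(k)}}\int_{d_{(k)}}^{d_{(k+1)}}(d_{(k+1)} - y)\,\hat{g}_s(y;\alpha_0)\,dy + R_k,
\end{align*}
where $R_k$ is the telescoped residual of the $\lambda_j$'s, shown to be nonnegative after block-wise aggregation over maximal affine blocks of $\hat{\Psi}_s$. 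Since the integral term is nonnegative and bounded above by $\hat{G}_s(d_{(k+1)};\alpha_0) - \hat{G}_s(d_{(k)};\alpha_0)$ (using $0 \leq (d_{(k+1)} - y)/(d_{(k+1)} - d_{(k)}) \leq 1$ and monotonicity of $\hat{G}_s$), the upper bound $\hat{G}_s(d_{(k)};\alpha_0) \leq k/T_s$ is immediate, and applying the identity at index $k-1$ together with the integral bound gives the matching lower bound $\hat{G}_s(d_{(k)};\alpha_0) \geq (k-1)/T_s$; rewriting in terms of $\mathbb{G}_s$ is exactly the claimed sandwich.

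The main obstacle will be the multiplier bookkeeping for the residual $R_k$. Along a maximal affine block of $\hat{\Psi}_s$ --- a stretch of consecutive samples across which the log-concavity constraint is tight --- each interior knot carries a positive KKT multiplier, and one must show these multipliers aggregate nonnegatively in $R_k$ rather than upsetting either direction of the sandwich. The standard workaround is to partition into maximal affine blocks and apply a block-wise stationarity condition, on each of which $\hat{g}_s$ is a single exponential and $\hat{G}_s$ interpolates smoothly between the block's endpoints. The compact support of $\hat{\Psi}_s$ from the first step eliminates boundary terms at infinity in the integrations by parts, while \Cref{asm: unknown alpha unknown G 2} enters indirectly to ensure well-posedness of the reduced finite-dimensional program.
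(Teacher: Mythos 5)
The paper does not actually prove this lemma: it is imported verbatim from \citet{dumbgen2009maximum}, so there is no in-paper argument to compare against. Judged on its own terms, your existence/uniqueness half is a correct outline of the standard reduction: chord replacement on sample-free intervals preserves concavity and the data term while strictly decreasing $\int e^{\Psi}$, the optimizer is $-\infty$ outside $[d_{(1)},d_{(T_s)}]$, and the problem collapses to a coercive, strictly concave finite-dimensional program in the knot values.

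The sandwich half, however, has a genuine gap exactly where you flag ``the main obstacle,'' and it is not merely bookkeeping. First, the tent directions are not in general feasible: $\hat{\Psi}_s + \epsilon T_k$ introduces convex kinks at $d_{(k\pm 1)}$, violating concavity whenever $\hat{\Psi}_s$ is affine across those knots, so your stationarity relation $\tfrac{1}{T_s} - \int T_k\,\hat{g}_s(y;\alpha_0)\,dy = \lambda_k$ is really a pair of one-sided inequalities whose directions depend on the active set. Second, and more seriously, your telescoped identity needs $R_k \geq 0$ to give the upper bound $\hat{G}_s(d_{(k)};\alpha_0) \leq k/T_s$, while the lower bound at index $k$ (obtained from the identity at index $k-1$) requires the residual to be controlled from the \emph{opposite} side, $R_{k-1} \leq 0$ or an exact equality; a single claim that the multipliers ``aggregate nonnegatively'' cannot deliver both directions, and no argument for either sign is actually supplied. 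The standard way to close this --- and the route taken in the cited reference --- is to perturb by hinge functions $\pm(x-\tau)_+$ rather than tents: the concave direction is always feasible, yielding an inequality valid for all $\tau$, while both signs are feasible precisely at the knots of $\hat{\Psi}_s$, yielding the equality needed for the reverse bound; the pointwise sandwich at the sample points then follows from these integrated conditions. As written, your proposal establishes existence and uniqueness and the skeleton of the sandwich, but not the sandwich itself.
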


Give the above characterization of $\hat{\Psi}_s(\cdot,\alpha_0)$ and $\hat{G}_s(\cdot;\alpha_0)$ we provide the uniform convergence
bound for $|\hat{G}_s(d;\alpha_0) - G(z)|$ in the following lemma.

\begin{restatable}{lemma}{convergencebound}
\label{lem: C.2}
Suppose that $T_s \gg \log^2(2/\delta)$ for some $\delta > 0$, then with probability at least $1 - \delta$, $\forall d \in [0, 1], |\hat{G}_s(d;\alpha_0) - G(z)| \leq O(\sqrt{\log(1/\delta)/T_s})$ holds .
\end{restatable}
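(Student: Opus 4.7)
The plan is to bound $|\hat G_s(d;\alpha_0) - G(d)|$ uniformly in $d \in [0,1]$ by inserting the empirical CDF $\mathbb{G}_s$ as an intermediate object. Concretely, I would use the triangle inequality
\begin{equation*}
    |\hat G_s(d;\alpha_0) - G(d)| \;\le\; |\hat G_s(d;\alpha_0) - \mathbb{G}_s(d)| \;+\; |\mathbb{G}_s(d) - G(d)|,
\end{equation*}
and handle the two pieces separately. The first piece is the ``log-concave smoothing error'' of the estimator, and the second is the standard empirical process term.

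For the second piece, note that the recovered samples $d_t = (p_t-(1-\alpha_0)b_t)/\alpha_0$ for $t \in \Gamma_s$ are exactly the true highest-competing bids, hence i.i.d.\ from $G$ (the $b_t$'s are fixed at the start of the episode using prior-episode data, so conditional on the episode's bid vector the $d_t$'s remain i.i.d.). Applying the Dvoretzky--Kiefer--Wolfowitz inequality then yields
\begin{equation*}
    \Pr\!\left[\,\sup_{d}\, |\mathbb{G}_s(d) - G(d)| \ge \sqrt{\tfrac{\log(4/\delta)}{2T_s}}\,\right] \;\le\; \tfrac{\delta}{2},
\end{equation*}
which is of the desired order $O(\sqrt{\log(1/\delta)/T_s})$.

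For the first piece, I would leverage \Cref{lem: C.1} together with the monotonicity of both $\hat G_s$ and $\mathbb{G}_s$. At any sample point $d_t$, Lemma C.1 gives $|\hat G_s(d_t;\alpha_0) - \mathbb{G}_s(d_t)| \le 1/T_s$. For a general $d \in [0,1]$, let $d_{(i)} \le d < d_{(i+1)}$ be consecutive order statistics of the sample. Then $\mathbb{G}_s$ is constant on $[d_{(i)}, d_{(i+1)})$ while $\hat G_s$ is non-decreasing, so
\begin{equation*}
    \hat G_s(d_{(i)};\alpha_0) \;\le\; \hat G_s(d;\alpha_0) \;\le\; \hat G_s(d_{(i+1)};\alpha_0),
\end{equation*}
and sandwiching both sides by $\mathbb{G}_s(d_{(i)}) \pm 1/T_s$ using \Cref{lem: C.1} gives $|\hat G_s(d;\alpha_0) - \mathbb{G}_s(d)| \le 1/T_s$ uniformly. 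The boundary regions $d < d_{(1)}$ and $d > d_{(T_s)}$ require a brief extra step: on these regions $\mathbb{G}_s(d) \in \{0,1\}$ while $\hat G_s(\cdot;\alpha_0)$ is the CDF of a log-concave density with support contained in the convex hull of the samples, so the same $1/T_s$ slack applies there as well. Combining the two pieces via a union bound gives the claimed $O(\sqrt{\log(1/\delta)/T_s})$ uniform rate, since $1/T_s$ is dominated by the DKW term under the assumption $T_s \gg \log^2(2/\delta)$.

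The main obstacle I anticipate is not the main body of the argument (which is essentially triangle inequality + DKW + Lemma C.1) but rather verifying that the samples $d_t$ used by the episodic estimator are genuinely i.i.d.\ from $G$ in the sense needed for DKW, and handling the region of $d$ outside the convex hull of the observed samples where the log-concave estimator's support is truncated. Both are mild: the episodic structure freezes $\{b_t\}_{t\in\Gamma_s}$ before the round starts, preserving the i.i.d.\ structure of $\{d_t\}_{t\in\Gamma_s}$; and the boundary handling is absorbed into the $1/T_s$ slack. Were one to want the sharper constants from \citet{dumbgen2009maximum}, one could replace DKW with their bracketing-based uniform bound for log-concave MLE, but for the stated rate the above approach suffices.
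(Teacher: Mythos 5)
Your proposal is correct, and it reaches the stated bound by a genuinely different route for the approximation term. The paper also splits off $|\mathbb{G}_s - G|$ via DKW, but it handles the gap between $\hat G_s$ and $\mathbb{G}_s$ at a \emph{non-sample} point $d$ by first locating a sample $y$ with $|y-d|\le r_s = \frac{1}{2B_1\sqrt{T_s}}$ (an event whose probability is controlled using the density lower bound $B_1$, and which is where the hypothesis $T_s \gg \log^2(2/\delta)$ is actually used), and then paying two Lipschitz penalties $B_2 r_s$ to move $\hat G_s$ and $G$ from $d$ to $y$; this yields the extra $\frac{B_2}{B_1\sqrt{T_s}}$ term in their bound and relies on \Cref{asm: unknown alpha unknown G 2}. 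You instead sandwich $\hat G_s(d)$ between its values at the two order statistics bracketing $d$, use \Cref{lem: C.1} at those sample points, and exploit that $\mathbb{G}_s$ is constant between consecutive order statistics, obtaining the deterministic uniform bound $|\hat G_s(d;\alpha_0)-\mathbb{G}_s(d)|\le 1/T_s$ with no density assumptions, no probabilistic covering event, and no Lipschitz constants. Your version is therefore both cleaner and slightly sharper ($1/T_s + \sqrt{\log(4/\delta)/(2T_s)}$ versus the paper's additional $O(1/\sqrt{T_s})$ constant-dependent term), and it is honestly uniform in $d$, whereas the paper's covering event is stated pointwise in $d$. Your identification of the real technical content --- that the $d_t$ recovered from $(p_t,b_t)$ under full feedback are exactly the i.i.d.\ draws from $G$ so DKW applies, and that the region outside the sample's convex hull is absorbed by the same $1/T_s$ slack --- is accurate; the only caveat worth recording is that distinctness of the order statistics (a.s.\ true since $G$ has a density) is implicitly used when you equate the jump of $\mathbb{G}_s$ across consecutive order statistics with $1/T_s$.
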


Given the non-parametric estimator $\hat{G}_s(\cdot;\alpha)$ introduced in the above, we minimize the following MLE loss function to compute $\widetilde{\alpha}_s$,
\begin{align}
   \calL_s(\alpha) = - \frac{1}{T_s} & \sum_{t \in \Gamma_s}\Big[ \xi_t \log \hat{G}_s(\varepsilon_t(\alpha); \alpha) \nonumber \\
    & + (1- \xi_t) \log(1 - \hat{G}_s(\varepsilon_t(\alpha); \alpha))\Big], \label{eqn: MLE alpha}
\end{align}
where $\xi_t = \mathbb{I}\{p_t \leq b_t + \underline{\alpha}W/2\}$ and $\varepsilon_t(\alpha) = b_t + \underline{\alpha}W/(2\alpha)$.
Here, the indicator $\xi_t$ is carefully chosen so that
\begin{align*}
\bbE \left[\xi_t\right] = G(\varepsilon_t(\alpha)),
\end{align*}
and $\forall \alpha \in [\underline{\alpha}, 1]$, $0 < \varepsilon_t(\alpha) < 1 + W$.

\begin{restatable}{theorem}{thirdtheorem}
\label{thm: unknown alpha unknown G full}
Suppose that \Cref{asm: unknown alpha unknown G} holds. For repeated non-credible second-price auctions with unknown credibility $\alpha_0$, known distribution $G$ and full feedback, there exists a bidding algorithm $\Cref{algo: unknown alpha unknown G}$ that achieves $\widetilde{O}(T^{1/2})$ if \Cref{asm: unknown alpha unknown G 2} holds. And the lower bound on regret for this problem is $\Omega(T^{1/2})$.
\end{restatable}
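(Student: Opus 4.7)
The lower bound is immediate from \Cref{lem: known alpha unknown G main lower}: that lemma constructs a family of distributions $G$ on which every algorithm suffers regret at least $c(1-\alpha_0)\sqrt{T}$ even under full feedback and with $\alpha_0$ known. Because \Cref{asm: unknown alpha unknown G} only requires $\alpha_0\in[\underline\alpha,1]$ and nothing forbids instances with $\alpha_0$ bounded away from $1$, the bound $\Omega(\sqrt{T})$ carries over to our strictly harder setting. The remainder of the argument concerns the upper bound. The plan is to exploit the doubling-style schedule $T_s = T^{1-2^{-s}}$: turn per-episode estimation rates for $(\widetilde\alpha_s,\hat G_s)$ into a per-round regret bound usable in the next episode, then close a geometric sum across the $S=O(\log\log T)$ episodes.

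The first ingredient is a per-round regret bound that is \emph{quadratic} in the estimation errors. Write $\Delta\alpha_s := |\widetilde\alpha_s-\alpha_0|$ and $\Delta G_s := \sup_{b\in[0,1+W]}|\hat G_s(b;\widetilde\alpha_s)-G(b)|$. Under \Cref{asm: unknown alpha unknown G 2}, the function $r(v_t,\cdot,\alpha_0)$ is strongly concave in $b$ on a neighbourhood of its interior maximizer $b_t^\ast=b^\ast(v_t,\alpha_0)$, with curvature controlled by $B_1$ and $B_3$. Because the algorithmic bid $b_t$ maximizes the perturbed objective in which $(\alpha_0,G)$ is replaced by $(\widetilde\alpha_{s-1},\hat G_{s-1})$, a second-order Taylor expansion around $b_t^\ast$ together with the first-order optimality condition for $b_t^\ast$ yields
\begin{align*}
r(v_t,b_t^\ast,\alpha_0)-r(v_t,b_t,\alpha_0)=O\!\left((\Delta\alpha_{s-1})^2+(\Delta G_{s-1})^2\right).
\end{align*}

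The second ingredient is the $\widetilde O(T_s^{-1/2})$ rate for both $\Delta\alpha_s$ and $\Delta G_s$. The density-side rate at $\alpha_0$ is exactly \Cref{lem: C.2}, applicable because $(p_t-(1-\alpha_0)b_t)/\alpha_0=d_t$ remains i.i.d.\ from the log-concave law $G$ regardless of how $b_t$ was chosen. For $\Delta\alpha_s$ I would analyze the population loss $\calL_s^\ast(\alpha):=-\bbE[\xi_t\log G(\varepsilon_t(\alpha))+(1-\xi_t)\log(1-G(\varepsilon_t(\alpha)))]$. The threshold $\varepsilon_t(\alpha)=b_t+\underline\alpha W/(2\alpha)$ was chosen so that $\varepsilon_t(\alpha)\in(0,1+W)$ and $\bbE[\xi_t]=G(\varepsilon_t(\alpha_0))$; combined with $B_1\le g\le B_2$, these force both event probabilities to stay strictly inside $(0,1)$, the unique minimizer of $\calL_s^\ast$ to lie at $\alpha_0$, and the Hessian at $\alpha_0$ to be bounded below. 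Converting this local strong convexity into the parametric rate requires a uniform concentration bound on $\calL_s(\alpha)-\calL_s^\ast(\alpha)$ over $[\underline\alpha,1]$; a standard empirical-process argument combined with the density rate above then gives $\Delta\alpha_s=\widetilde O(T_s^{-1/2})$.

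Assembling the pieces, episode $s\ge 2$ contributes expected regret at most $T_s\cdot\widetilde O(T_{s-1}^{-1})=\widetilde O(T^{2^{-s}})$, while episode $1$ contributes at most $T_1=T^{1/2}$ by the trivial bound (where pure exploration with $b_t=1$ is already used). Summing the geometric series yields the claimed $\widetilde O(\sqrt T)$ regret. The main obstacle is the uniform-in-$\alpha$ control of $\hat G_s(\cdot;\alpha)$ required in the second ingredient: the samples $(p_t-(1-\alpha)b_t)/\alpha=(\alpha_0/\alpha)d_t+(1-\alpha_0/\alpha)b_t$ are i.i.d.\ from a single log-concave law \emph{only} at $\alpha=\alpha_0$, so \Cref{lem: C.2} cannot be invoked pointwise for all $\alpha$. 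The way I would handle this is to exploit that the parametrization is affine with slope and intercept smooth in $\alpha$, so that $\hat G_s(\cdot;\alpha)$ forms a Lipschitz family around $\alpha_0$; a covering argument over $[\underline\alpha,1]$ then upgrades the pointwise rate of \Cref{lem: C.2} to the uniform rate needed to identify $\widetilde\alpha_s$, closing the proof.
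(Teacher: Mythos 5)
The weak link is your first ingredient, the \emph{quadratic} per-round bound $r(v_t,b_t^*,\alpha_0)-r(v_t,b_t,\alpha_0)=O\bigl((\Delta\alpha_{s-1})^2+(\Delta G_{s-1})^2\bigr)$. The quantities you control at rate $\widetilde{O}(T_{s-1}^{-1/2})$ are sup-norm errors of the \emph{CDF} and of $\alpha$, which translate into a sup-norm perturbation of the objective: $\sup_b|r(v,b,\alpha_0)-\hat r(v,b)|\le\epsilon$ with $\epsilon=O(\Delta\alpha_{s-1}+\Delta G_{s-1})$. For the maximizer $b_t$ of $\hat r$, the standard argument gives $r(v,b_t^*,\alpha_0)-r(v,b_t,\alpha_0)\le 2\epsilon$, and strong concavity with modulus $\mu$ then only yields $|b_t-b_t^*|\le 2\sqrt{\epsilon/\mu}$; substituting back into the Taylor expansion returns $O(\epsilon)$, not $O(\epsilon^2)$. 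A genuinely quadratic bound would require first-derivative control of the perturbation, i.e.\ a sup-norm rate for the \emph{density} error $\sup_x|g(x)-\hat g_{s-1}(x)|$, which \Cref{lem: C.2} does not supply (and which for the log-concave MLE is slower than $T_s^{-1/2}$ in any case). Separately, global strong concavity of $r(v,\cdot,\alpha_0)$ is not implied by \Cref{asm: unknown alpha unknown G 2}: $\partial_b^2 r=(v-b)g'(b)-(2-\alpha)g(b)$ can be positive when $g'>0$, so at best you get unimodality plus local curvature $\partial_b^2 r(b^*)\le -g(b^*)$ at an interior maximizer, and you would still need a separate argument that the perturbed maximizer lands in that neighbourhood.

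The paper's proof sidesteps all of this: it uses the elementary three-term add-and-subtract decomposition together with the optimality of $b_t$ for the estimated objective to get the \emph{linear} bound $r(v_t,b_t^*,\alpha_0)-r(v_t,b_t,\alpha_0)\le 3\sup_x|G(x)-\hat G_{s-1}(x;\widetilde\alpha_{s-1})|$, and the schedule $T_s=T^{1-2^{-s}}$ is chosen precisely so that $T_s\cdot T_{s-1}^{-1/2}=\sqrt{T}$, so each of the $S\le\log\log T$ episodes contributes $\widetilde{O}(\sqrt T)$. Your final tally survives if you replace the quadratic step by this linear bound (each episode $s\ge 2$ then pays $\widetilde{O}(\sqrt T)$ rather than $\widetilde{O}(T^{2^{-s}})$), so the theorem is still reached, but the quadratic bound as written is unsupported and should be removed. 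The remainder of your outline --- the lower bound via \Cref{lem: known alpha unknown G main lower}, the local-strong-convexity analysis of the MLE loss for $\widetilde\alpha_s$, and the concern about uniform-in-$\alpha$ control of $\hat G_s(\cdot;\alpha)$ --- matches the paper's \Cref{lem: C.3} and \Cref{lem: C.4} in spirit; the paper handles the uniformity issue not by a covering argument but by conditioning on $|\hat\beta_s-\beta_0|\le\mathcal{K}_s$ and sandwiching the empirical CDF built from the shifted samples $\hat\beta_s(p_t-b_t)+b_t$.
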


We first give a bound on distance between $\widetilde{\alpha}_s$ and $\alpha_0$ in \Cref{lem: C.3}. Next we show that the distribution estimator $\hat{G}_s(\cdot; \widetilde{\alpha}_s)$ uniformly converges to the real distribution $G$ in \Cref{lem: C.4}. Our algorithm and the regret analysis follow the same spirit as in Theorem 4.3 in \citet{badanidiyuru2023learning}. The main difference is that we use $\xi_t$ rather than $x_t$ as our indicator function. We defer the proof of this section to \Cref{app: unknown alpha unknown G}.

\section{Conclusion}

In this paper, we formulate and study the problem of learning against non-credible auctions and evaluate several algorithms under various information structures.

It remains an open problem whether we can further relax the assumptions over distribution $G$, especially for \Cref{thm: unknown alpha known G} and \Cref{thm: unknown alpha unknown G full}. Can we design a near optimal algorithm for general distributions with unknown $\alpha_0$ and known $G$? 
Can we apply the upper bound result of \Cref{thm: unknown alpha unknown G full} to the bandit feedback model?
In this work, we assume the seller's credibility $\alpha_0$ is fixed. Another research direction is to consider that the seller may use a time-dependent charging rule. Will learning learning become impossible in this more complicated setting?
\bibliography{aaai24}

\newpage
\appendix
\onecolumn

{
\linespread{1.2}
\section*{\centering \Large Supplementary Material of ``Learning against Non-credible Auctions''}
}
\vspace{1em}

\section{Missing Algorithms and Proofs of \Cref{sec: known alpha unknown G}}
\label{app: known alpha unknown G}

\begin{algorithm}
\caption{Learning with known $\alpha_0 = 0$ (or unknown $\alpha_0$), unknown $G$ and bandit feedback}
\label{algo: known alpha unknown G fpa}
    \SetKwInput{KwInit}{Initialization}
    \KwIn{Time horizon $T$.}
    \KwInit{Let $\calB = \{b^1, \cdots, b^K\}$ be the bid set with $b^k = (k-1)/K$; for $t \in [K]$, the bidder submits a bid $b_t = b^t$.}
    \For{$t \gets K + 1$ \KwTo $T$}{
        The bidder receives the value $v_t \in [0, 1]$. \\
        For every $k\in [K]$, the bidder counts the observation by
        \begin{align}
            n_t^k \gets \sum_{s=1}^{t-1} \mathbb{I}\left\{b_s = b^k\right\}.
        \end{align}\\
        For every $k\in [K]$, the bidder computes the confidence bound:
        \begin{align}
            w_t^k = \sqrt{\frac{2\ln{T}}{n_t^k}},
        \end{align}\\
        For every $k \in [K]$, the bidder estimates the reward by
        \begin{align}
            \widetilde{r}_t(v_t, b^k) = &\frac{1}{n_t^k}\sum_{s = 1}^{t-1}\mathbb{I}\left\{b_s = b^k\right\}\left(x_s v_t - c_s\right),
        \end{align}\\
        The bidder submits a bid $b_t = \arg\max_{b\in \calB} \left( \widetilde{r}_t(v_t, b) + w_t^k\right)$.
    }
\end{algorithm}

\begin{algorithm}[!ht]
\caption{Learning with known $\alpha_0\in (0,1)$, unknown $G$ and bandit feedback}
\label{algo: known alpha unknown G main}
    \SetKwInput{KwInit}{Initialization}
    \KwIn{Time horizon $T$; credibility parameter $\alpha_0 \in (0, 1)$.}
    \KwInit{Let $\calV = [v^1, \cdots, v^M]$ be the value set with $v^m = (m-1)/M$;
        let $\calB = \{b^1, \cdots, b^K\}$ be the bid set with $b^k = (k-1)/K$; 
        set $\calB_1^m \gets \calB$ for each $v_m \in \calV$; 
        select failure probability $\delta\in (0,1)$.}
    \For{$t \gets 1$ \KwTo $T$}{
        The bidder receives the value $v_t \in [0, 1]$ and round it to $v^{m(t)}$ by
        \begin{align}
            v^{m(t)} = \max\{u\in \calV: u \leq v_t\}.
        \end{align}\\
        The bidder submits a bid $b_t = \sup \calB_{t}^{m(t)}$. \\
        For every $k\in [K]$, the bidder counts the observation by
        \begin{align}
            n_t^k \gets \sum_{s=1}^{t} \mathbb{I}\left\{b_s \geq b^k\right\}.
        \end{align}\\
        For every $m \in [M], k \in [K]$, the bidder estimates the reward by
        \begin{align}
        \label{eqn: reward estimation}
            \widetilde{r}_t(v^m, b^k, \alpha_0) = &\frac{1}{n_t^k}\sum_{s = 1}^{t}\mathbb{I}\left\{b_s \geq b^k\right\}\mathbb{I}\left\{b^k \geq \widetilde{d}_s\right\}x_s\left(v^m - \alpha_0\widetilde{d}_s -(1-\alpha_0) b^k\right),
        \end{align}
        where $\widetilde{d}_s = \left(c_s - (1-\alpha_0)b_s\right)/\alpha_0$.\\
        \For{$m = M,M-1,\ldots, 1$}{
            The bidder eliminates bids by:
            \begin{align}
                \calB_{t}^m = \left\{b^k\in \calB_{t}^m: b^k \leq \min_{s > m} \sup \calB_{t+1}^{s}\right\}.
            \end{align} \\
            The bidder computes the confidence bound:
            \begin{align}
                w_t^m = \sqrt{\frac{4\ln{T}\log(KT/\delta)}{N_t^m}},
            \end{align}
            where $N_t^m = \min_{b^k\in \calB_{t}^m} n_t^k$. \\
            The bidder eliminates bids by:
                \begin{align}
                    \calB_{t+1}^m \gets \Big\{& b^k\in  \calB_{t}^m: \widetilde{r}_t(v^m, b^k, \alpha_0) \geq \max_{b \in \calB_{t}^m}\widetilde{r}_t(v^m, b, \alpha_0) - 2 w_t^m \Big\}. 
                \end{align}
        }
    }
\end{algorithm}

\increasing*

\begin{proof}[Proof of \Cref{lem: optimal bid monotonicity}]
Fix any $v_1, v_2$ with $v_1 \leq v_2$. For any $b \leq b^*(v_1, \alpha)$, it holds that
\begin{align*}
    r(v_2, b, \alpha) & = r(v_1, b, \alpha) + (v_2 - v_1)G(b) \\
    & \leq r(v_1, b^*(v_1, \alpha), \alpha) + (v_2 - v_1)G(b^*(v_1, \alpha)) \\
    & = r(v_2, b^*(v_1, \alpha), \alpha),
\end{align*}
where the inequality follows from the definition of $b^*(v_1, \alpha)$ and the conditions $v_1 \leq v_2, b \leq b^*(v_1, \alpha)$. Thus, all bids no larger than $b^*(v_1, \alpha)$ cannot be the largest maximizer for $v_2$.

Fix any $\alpha_1, \alpha_2$ with $\alpha_1 \leq \alpha_2$. For any $b \leq b^*(v, \alpha_1)$, it holds that
\begin{align*}
    r(v, b, \alpha_2) & = r(v, b, \alpha_1) + (\alpha_2 - \alpha_1) \int_{0}^{b} G(y)\,dy \\
    & \leq r(v, b^*(v, \alpha_1), \alpha_1) + (\alpha_2 - \alpha_1) \int_{0}^{b^*(v, \alpha_1)} G(y)\,dy\\
    & = r(v, b^*(v, \alpha_1), \alpha_2),
\end{align*}
where the inequality follows from the definition of $b^*(v, \alpha)$ and the conditions $\alpha_1 \leq \alpha_2, b \leq b \leq b^*(v, \alpha_1)$. Thus, all bids no larger than $b^*(v, \alpha_1)$ cannot be the largest maximizer for $\alpha_2$.
\end{proof}

\knownalphalower*

\begin{proof}[Proof of \Cref{lem: known alpha unknown G main lower}]
The proof follows the Le Cam’s two-point method \cite{tsybakov2009introduction}.
For any $\alpha_0 \in (0,1)$, consider the following two candidate distributions supported on $[0,1]$:
\begin{equation*}
G_1(x) = \begin{cases}
0 & \text{if } x \in \left[0,\frac{1-\alpha_0}{3-2\alpha_0}\right) \\
\frac{1}{2} + \Delta & \text{if } x \in \left[\frac{1-\alpha_0}{3-2\alpha_0}, \frac{2-\alpha_0}{3-2\alpha_0}\right) \\
1 & \text{if } x \in \left[\frac{2-\alpha_0}{3-2\alpha_0}, 1\right]
\end{cases}, \qquad
G_2(x) = \begin{cases}
0 & \text{if } x \in \left[0,\frac{1-\alpha_0}{3-2\alpha_0}\right) \\
\frac{1}{2} - \Delta & \text{if } x \in \left[\frac{1-\alpha_0}{3-2\alpha_0}, \frac{2-\alpha_0}{3-2\alpha_0}\right) \\
1 & \text{if } x \in \left[\frac{2-\alpha_0}{3-2\alpha_0}, 1\right]
\end{cases},
\end{equation*}
where $\Delta$ is some parameter to be chosen later. In other words, $G_1$ corresponds to a discrete random variable taking value in $\left\{\frac{1-\alpha_0}{3-2\alpha_0}, \frac{2-\alpha_0}{3-2\alpha_0}\right\}$ with probability $(1/2+\Delta, 1/2-\Delta)$, and $G_2$ corresponds to the probability $(1/2-\Delta, 1/2+\Delta)$. Fixing $v_t \equiv 1$, let $R_1(b)$ and $R_2(b)$ be the expected per-round reward when bidding $b$ under $G_1$ and $G_2$, respectively. After some algebra, it is straightforward to check that
\begin{align*}
\max_{b\in [0,1]} R_1(b) &= \max_{b\in [0,1]} \left[(1-b)G_1(b) + \alpha_0 \int_{0}^b G_1(x)\mathrm{d}x\right] = \frac{2-\alpha_0}{3-2\alpha_0}\left(\frac{1}{2}+\Delta\right),\\
\max_{b\in [0,1]} R_2(b) &= \max_{b\in [0,1]} \left[(1-b)G_2(b) + \alpha_0 \int_{0}^b G_2(x)\mathrm{d}x\right] = \frac{1}{3-2\alpha_0}\left(1-\frac{\alpha_0}{2}+\alpha_0\Delta\right),\\
\max_{b\in [0,1]} \left(R_1(b)+R_2(b)\right) &= \max_{b\in [0,1]} \left[(1-b)\left(G_1(b)+G_2(b)\right) + \alpha_0 \int_{0}^b \left(G_1(x)+G_2(x)\right)\mathrm{d}x\right] = \frac{2-\alpha_0}{3-2\alpha_0}.
\end{align*}
Hence, for any $b_t\in[0,1]$, we have
\begin{align}
&\left(\max_{b\in[0,1]}R_1(b) - R_1(b_t)\right) + \left(\max_{b\in[0,1]}R_2(b) - R_2(b_t)\right) \nonumber \\
&\geq \max_{b\in[0,1]}R_1(b) + \max_{b\in[0,1]} R_2(b) -\max_{b\in[0,1]}\left(R_1(b) + R_2(b)\right) \nonumber \\
&=\frac{2-2\alpha_0}{3-2\alpha_0} \Delta. \label{eq: two-point method}
\end{align}
The inequality \eqref{eq: two-point method} is the separation condition required in the two-point method: there is no single bid $b_t$ that can obtain a uniformly small instantaneous regret under both $G_1$ and $G_2$.

In the full information model, when $\alpha_0 \neq 0$, the bidder can always infer $d_t$ by
\begin{align*}
    d_t = \left(p_t - (1-\alpha_0)b_t\right)/\alpha_0.
\end{align*}
Let $P_i^t, i\in \{1, 2\}$ be the distribution of all historical samples $(d_1, \cdots, d_{t-1})$ at the beginning of time $t$. Then for any policy $\pi$, 
\begin{align}
\sup_{G} \mathrm{Regret}(\pi; G) & \overset{(\text a)}{\geq} \frac{1}{2}\mathrm{Regret}(\pi; G_1) + \frac{1}{2}\mathrm{Regret}(\pi; G_2)\nonumber \\
&=\frac{1}{2}\sum_{t=1}^T \left(\mathbb{E}_{P_1^t}\left[\max_{b\in[0,1]} R_1(b) - R_1(b_t)\right]+\mathbb{E}_{P_2^t}\left[\max_{b\in[0,1]} R_2(b) - R_2(b_t)\right]\right) \nonumber \\
&\overset{(\text b)}{\geq} \frac{1}{2}\sum_{t=1}^T \frac{2-2\alpha_0}{3-2\alpha_0}\Delta\int \min \{\mathrm{d}P_1^t, \mathrm{d}P_2^t\} \nonumber \\
&\overset{(\text c)}{=} \frac{1}{2}\sum_{t=1}^T \frac{2-2\alpha_0}{3-2\alpha_0}\Delta \left(1-\|P_1^t-P_2^t\|_{\mathrm{TV}}\right) \nonumber \\
&\geq \frac{1-\alpha_0}{3}\Delta \sum_{t=1}^T \left(1-\|P_1^t-P_2^t\|_{\mathrm{TV}}\right), \label{eq: total variation distance argument}
\end{align}
where (a) is due to the fact that the maximum is no smaller than the average, (b) follows from \eqref{eq: two-point method}, and (c) is due to the identity $\int \min \{\mathrm{d}P, \mathrm{d}Q\} = 1 - \|P-Q\|_{\mathrm{TV}}$. Invoking \Cref{lem: tvd-kl} and using the fact that for $\Delta \in (0,1/4)$, 
\begin{align}
D_{\mathrm{KL}}(P_1^t || P_2^t) &= (t-1) D_{\mathrm{KL}}(G_1 || G_2) \nonumber \\
&= (t-1) \left(\left(\frac{1}{2}+\Delta\right)\log\frac{1/2+\Delta}{1/2-\Delta}+\left(\frac{1}{2}-\Delta\right)\log\frac{1/2-\Delta}{1/2+\Delta} \right) \nonumber \\
&\leq 16T\Delta^2,  \label{eq: kl decomposition}
\end{align}
we have the following inequality on total variation distance:
\begin{equation}
\label{eq: total variation distance inequality}
1 - \|P_1^t -P_2^t\|_{\mathrm{TV}} \geq \frac{1}{2}\exp\left(-16T\Delta^2\right).
\end{equation}
Finally, choosing $\Delta = \frac{1}{4\sqrt{T}}$, combining inequalities \eqref{eq: total variation distance argument} and \eqref{eq: total variation distance inequality}, we conclude that \Cref{lem: known alpha unknown G main lower} holds with the constant $c = \frac{1}{24e}$.
\end{proof}

\begin{lemma}[\citet{tsybakov2009introduction}]
\label{lem: tvd-kl}
    Let $P, Q$ be two probability measures on the same space. It holds that
    \begin{equation*}
        1-\|P-Q\|_{\mathrm{TV}}\geq \frac{1}{2}\exp\left(-\frac{D_{\mathrm{KL}}(P||Q)+D_{\mathrm{KL}}(Q||P)}{2}\right).
    \end{equation*}
\end{lemma}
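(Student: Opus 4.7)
The plan is to prove this symmetric Pinsker-type inequality by going through the Bhattacharyya coefficient, which provides a natural bridge between total variation and Kullback--Leibler divergence. Throughout, fix any common dominating measure $\mu$ (for instance $\mu = P + Q$) and let $p, q$ denote the densities of $P, Q$ with respect to $\mu$.

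First I would recall the identity $1 - \|P-Q\|_{\mathrm{TV}} = \int \min(p,q)\,d\mu$, which was already used in the proof of \Cref{lem: known alpha unknown G main lower}. The goal is therefore to lower-bound $\int \min(p,q)\,d\mu$ in terms of the symmetrized KL divergence. The key object is the Bhattacharyya coefficient $\mathrm{BC}(P,Q) := \int \sqrt{pq}\,d\mu$. I would first relate $\mathrm{BC}$ to $\int \min(p,q)\,d\mu$ via Cauchy--Schwarz: since $\sqrt{pq} = \sqrt{\min(p,q)\cdot\max(p,q)}$, Cauchy--Schwarz yields
\begin{equation*}
\mathrm{BC}(P,Q)^2 \;=\; \left(\int \sqrt{\min(p,q)\cdot\max(p,q)}\,d\mu\right)^{\!2} \;\le\; \int \min(p,q)\,d\mu \,\cdot\, \int \max(p,q)\,d\mu.
\end{equation*}
Combining this with $\int \max(p,q)\,d\mu = 1 + \|P-Q\|_{\mathrm{TV}} \le 2$ gives $\int \min(p,q)\,d\mu \ge \tfrac{1}{2}\mathrm{BC}(P,Q)^2$.

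Next, I would lower-bound $\mathrm{BC}(P,Q)$ using Jensen's inequality applied symmetrically under both measures. Writing $\mathrm{BC}(P,Q) = \mathbb{E}_P[\sqrt{q/p}]$ and applying Jensen to the concave function $\log$, one obtains
\begin{equation*}
\log \mathrm{BC}(P,Q) \;\ge\; \mathbb{E}_P\!\left[\tfrac{1}{2}\log(q/p)\right] \;=\; -\tfrac{1}{2} D_{\mathrm{KL}}(P\|Q);
\end{equation*}
symmetrically, writing $\mathrm{BC}(P,Q) = \mathbb{E}_Q[\sqrt{p/q}]$ and applying Jensen gives $\log \mathrm{BC}(P,Q) \ge -\tfrac{1}{2} D_{\mathrm{KL}}(Q\|P)$. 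Averaging these two bounds yields
\begin{equation*}
\log \mathrm{BC}(P,Q) \;\ge\; -\tfrac{1}{4}\bigl(D_{\mathrm{KL}}(P\|Q) + D_{\mathrm{KL}}(Q\|P)\bigr),
\end{equation*}
so $\mathrm{BC}(P,Q)^2 \ge \exp\!\bigl(-\tfrac{1}{2}(D_{\mathrm{KL}}(P\|Q) + D_{\mathrm{KL}}(Q\|P))\bigr)$.

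Finally, I would chain the two steps:
\begin{equation*}
1 - \|P-Q\|_{\mathrm{TV}} \;=\; \int \min(p,q)\,d\mu \;\ge\; \tfrac{1}{2}\mathrm{BC}(P,Q)^2 \;\ge\; \tfrac{1}{2}\exp\!\left(-\tfrac{D_{\mathrm{KL}}(P\|Q)+D_{\mathrm{KL}}(Q\|P)}{2}\right),
\end{equation*}
which is the claim. There is no real obstacle here: the result is classical and the only step that requires any care is a bookkeeping check that the Radon--Nikodym manipulations are valid on the common support (on the set where $p = 0$ or $q = 0$ the integrand $\sqrt{pq}$ vanishes and one should interpret $0\log 0 = 0$ in the usual convention). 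If either KL divergence is infinite the right-hand side is zero and the inequality is trivial, so we may assume both are finite and $P \ll Q$, $Q \ll P$ on the relevant support, making the Jensen steps rigorous.
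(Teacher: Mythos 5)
Your proof is correct. The paper itself gives no proof of \Cref{lem: tvd-kl}---it is imported verbatim from the cited reference---and your argument is essentially the classical one behind Tsybakov's Lemma 2.6 (the ``second Le Cam inequality''): lower-bound $\int \min(p,q)\,d\mu$ by $\tfrac{1}{2}\bigl(\int\sqrt{pq}\,d\mu\bigr)^2$ via Cauchy--Schwarz, then lower-bound the Bhattacharyya coefficient by Jensen; your only addition is averaging the two Jensen bounds (under $P$ and under $Q$) to produce the symmetrized exponent $\tfrac{1}{2}\bigl(D_{\mathrm{KL}}(P\|Q)+D_{\mathrm{KL}}(Q\|P)\bigr)$ exactly as stated, and your handling of the degenerate cases (infinite divergence, vanishing densities) is adequate.
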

\section{Missing Proofs of \Cref{sec: unknown alpha known G}}
\label{app: unknown alpha known G}

\alphaconvergence*

\begin{proof}[Proof of \Cref{lem: convergence of alpha unknown_alpha_known_G}]
Taking expectation on $r_s$ w.r.t. $d_s$, one has
\begin{align*}
    \bbE \left[r_s\right] = r(v_s, b_s, \alpha_0).
\end{align*}
Then by applying Azuma–Hoeffding inequality, with probability at least $1 - \delta/T$,
\begin{align*}
    \left|\sum_{s=1}^{t-1} \left(r_s - r(v_s, b_s, \alpha_0)\right)\right| \leq \sqrt{2(t-1)\log{(2T/\delta)}}.
\end{align*}
Then by \eqref{eqn: estimate alpha  unknown_alpha_known_G} in \Cref{algo: unknown alpha known G},
\begin{align*}
    \left|\sum_{s=1}^{t-1} \left(\widetilde{\alpha}_t - \alpha_0\right) \int_{0}^{b_s}G(y)\, dy\right| \leq & \left|\sum_{s=1}^{t-1} \left(r_s - r(v_s, b_s, \widetilde{\alpha}_t)\right)\right| + \left|\sum_{s=1}^{t-1} \left(r_s - r(v_s, b_s, \alpha_0)\right)\right| \\
   \leq & 2 \left|\sum_{s=1}^{t-1} \left(r_s - r(v_s, b_s, \alpha_0)\right)\right| \\
   \leq & 2\sqrt{2(t-1)\log{(2T/\delta)}}.
\end{align*}
Then the proof can be concluded by applying a union bound.
\end{proof}

\secondtheorem*

\begin{proof}[Proof of \Cref{thm: unknown alpha known G}]
First, we show that the loss incurred by choosing such a bid $b_t$ is small with probability at least $1-\delta$. For all $t \geq 2$, 
\begin{align}
    r(v_t, b_t, \alpha_0) & = r(v_t, b_t, \widetilde{\alpha}_t) - \left(\widetilde{\alpha}_t - \alpha_0 \right)\int_{0}^{b_t}G(y)\,dy, \nonumber \\
    & \overset{(\text a)}{\geq} r(v_t, b_t^*, \widetilde{\alpha}_t) - \left(\widetilde{\alpha}_t - \alpha_0 \right)\int_{0}^{b_t}G(y)\,dy, \nonumber \\
    & = r(v_t, b_t^*, \alpha_0) + \left(\widetilde{\alpha}_t - \alpha_0 \right)\int_{b_t}^{b_t^*}G(y)\,dy, \nonumber \\
    & \overset{(\text b)}{\geq} r(v_t, b_t^*, \alpha_0) - \frac{1}{B_1}(\widetilde{\alpha}_t - \alpha_0)^2, \nonumber \\
    & \overset{(\text c)}{\geq} r(v_t, b_t^*, \alpha_0) - \frac{1}{B_1}w_t^2, \label{eqn: near to optimal unknown_alpha_known_G}
\end{align}
where (a) holds by the choice of $b_t$ and (b) holds since $b_t^* = b^*(v_t, \alpha_0)$, $b_t = b^*(v_t, \widetilde{\alpha}_t)$ together with \Cref{lem: partial derivative upper bound}; (c) follows from \Cref{lem: convergence of alpha unknown_alpha_known_G}.

Next, we have
\begin{align}
    \sum_{s=1}^{t-1} \int_{0}^{b_s}G(y)dy & = \sum_{s=1}^{t-1} \int_{0}^{b^*(v_s, \widetilde{\alpha}_s)}G(y)\,dy. \nonumber \\
    & \overset{(\text a)}{\geq} \sum_{s=1}^{t-1} \int_{0}^{b^*(v_s, 0)}G(y)\,dy. \nonumber \\
    & \overset{(\text b)}{\geq} (t-1)\cdot \underbrace{\bbE_{v} \left[\int_{0}^{b^*(v, 0)}G(y)\,dy\right]}_{C_0} - \sqrt{2(t-1)\log{(T/\delta)}}, \nonumber
\end{align}
where (a) holds by \Cref{lem: optimal bid monotonicity} or \Cref{lem: partial derivative upper bound} and (b) holds with probability at least $1 - \delta/T$ by applying Azuma–Hoeffding inequality.

By \Cref{lem: C_0 constant unknown_alpha_known_G}, If $C_0 = 0$, no algorithm can obtain positive expected reward under regardless of the seller's credibility $\alpha_0$, which also implies any algorithm that guarantees $b_t \leq v_t$ can get a zero regret. Therefore, we can presume $C_0$ is a positive constant depending on the known distribution $G$. Then for some constant $C_1 \in (0, C_0)$, when $t > t_0 =  \frac{2\log{(T/\delta)}}{(C_0 - C_1)^2}$, we have 
\begin{align}
    w_t & \leq \frac{2\sqrt{2(t-1)\log{(2T/\delta)}}}{C_0(t-1) - \sqrt{2(t-1)\log{(T/\delta)}}} \nonumber \\
    & \leq \frac{2\sqrt{2\log{(2T/\delta)}}}{C_0\sqrt{t-1} - \sqrt{2\log{(T/\delta)}}} \nonumber \\
    & \leq \frac{2\sqrt{2\log{(2T/\delta)}}}{C_1\sqrt{t-1}}. \label{eqn: w upper bound unknown_alpha_known_G}
\end{align}

Combining \eqref{eqn: near to optimal unknown_alpha_known_G} and \eqref{eqn: w upper bound unknown_alpha_known_G}, we have with probability at least $1-2\delta$, 
\begin{align*}
    \sum_{t=1}^T r(v_t, b_t^*, \alpha_0) - \sum_{t=1}^T r(v_t, b_t, \alpha_0) \leq & t_0 + \frac{1}{B_1}\sum_{t=t_0+1}^{T} w_t^2 \\
    \leq & t_0 + \frac{8\log{(2T/\delta)}}{B_1C_1^2}  \sum_{t=t_0+1}^{T} \frac{1}{t-1} \\
    \leq & \frac{2\log{(T/\delta)}}{(C_0 - C_1)^2} + \frac{8\log{(2T/\delta)}}{B_1C_1^2} \log{T},
\end{align*}
which is $O(\log^2{T})$ by taking $\delta \sim T^{-1}$. The proof of the regret upper bound can be finished by taking expectation.

The lower bound part trivially holds. We have for some constant $c$,
\begin{align*}
    \inf_{\pi} \sup_{\alpha_0} \mathrm{Regret}(\pi) \geq c,
\end{align*}
since at least in the first round, no single bid $b_t$ that can obtain a uniformly small instantaneous regret for all $\alpha_0$.
\end{proof}

\secondcorollary*

\begin{proof}[Proof of \Cref{cor: unknown alpha known G}]
The proof is analogous to that of \Cref{thm: unknown alpha known G}. We first have
\begin{align}
    r(v_t, b_t, \alpha_0) & = r(v_t, b_t, \widetilde{\alpha}_t) - \left(\widetilde{\alpha}_t - \alpha_0 \right)\int_{0}^{b_t}G(y)\,dy, \nonumber \\
    & \geq r(v_t, b_t^*, \widetilde{\alpha}_t) - \left(\widetilde{\alpha}_t - \alpha_0 \right)\int_{0}^{b_t}G(y)\,dy, \nonumber \\
    & = r(v_t, b_t^*, \alpha_0) + \left(\widetilde{\alpha}_t - \alpha_0 \right)\int_{b_t}^{b_t^*}G(y)\,dy, \nonumber \\
    & \geq r(v_t, b_t^*, \alpha_0) - w_t, \label{eqn: near to optimal 2 unknown_alpha_known_G}
\end{align}
where the second inequality uses the face $\int_{b_t}^{b_t^*}G(y)\,dy \leq 1$ rather than \Cref{lem: partial derivative upper bound}. Together the inequality \eqref{eqn: w upper bound unknown_alpha_known_G}, which still holds given that $G$ is continuous, the proof can be finished by bounding $\sum_{t=1}^T w_t$.
\end{proof}

\begin{lemma}
\label{lem: partial derivative upper bound}
Suppose that \Cref{asm: unknown alpha known G} holds. Then $b^*(v, \alpha)$ is strictly increasing in $\alpha$ with derivative bounded by $1/B_1$.
\end{lemma}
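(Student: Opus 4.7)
The plan is to realize $b^*(v,\alpha)$ as an implicit function of $\alpha$ via the first-order optimality condition and then control its derivative using the log-concavity of $G$. Writing $F(b,\alpha) := \partial r/\partial b = (v-b)g(b) - (1-\alpha)G(b)$, any interior maximizer satisfies $F(b^*,\alpha)=0$, i.e., $(v-b^*)g(b^*) = (1-\alpha)G(b^*)$. Before invoking the implicit function theorem I would dispose of the boundary cases: $F(0,\alpha) = vg(0) \ge 0$ and $F(1,\alpha) = (v-1)g(1) - (1-\alpha) \le 0$ with strict inequality unless $v=\alpha=1$, so the maximizer is interior except in the trivial cases $v=0$ or $v=\alpha=1$, where $b^*$ is constant in $\alpha$ and the two conclusions hold vacuously.

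The crucial step is to bound $\partial F/\partial b$ from above at $b^*$. A direct computation gives
\begin{align*}
\frac{\partial F}{\partial \alpha}(b,\alpha) = G(b), \qquad \frac{\partial F}{\partial b}(b,\alpha) = (v-b)g'(b) - (2-\alpha)g(b).
\end{align*}
Substituting the FOC $v-b^* = (1-\alpha)G(b^*)/g(b^*)$ into the second expression yields
\begin{align*}
\left.\frac{\partial F}{\partial b}\right|_{b=b^*} = (1-\alpha)\frac{G(b^*)g'(b^*)}{g(b^*)} - (2-\alpha)g(b^*).
\end{align*}
Log-concavity of the CDF $G$ is equivalent to $(\log G)'' \le 0$, i.e., $g'(b)G(b) \le g(b)^2$, so $G(b^*)g'(b^*)/g(b^*) \le g(b^*)$ whenever $g(b^*)>0$; since $g \ge B_1 > 0$ this is always available, and when $g'(b^*) < 0$ the inequality is automatic. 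Combining with $1-\alpha \le 2-\alpha$ gives
\begin{align*}
\left.\frac{\partial F}{\partial b}\right|_{b=b^*} \le (1-\alpha)g(b^*) - (2-\alpha)g(b^*) = -g(b^*) \le -B_1 < 0.
\end{align*}

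Finally I would invoke the implicit function theorem, which provides differentiability of $b^*$ in $\alpha$ together with
\begin{align*}
\frac{\partial b^*}{\partial \alpha} = -\frac{\partial F/\partial \alpha}{\partial F/\partial b} = \frac{G(b^*)}{-\partial F/\partial b} \in \left(0,\; \frac{G(b^*)}{B_1}\right] \subseteq \left(0,\; \frac{1}{B_1}\right],
\end{align*}
delivering both the strict monotonicity in $\alpha$ (since $G(b^*)>0$ whenever $b^*>0$) and the $1/B_1$ derivative bound. The main subtlety I expect is the interplay between the tie-breaking convention (largest maximizer, as specified in \Cref{sec: problem formulation}) and the implicit function argument: once $\partial F/\partial b < 0$ holds at every root of $F(\cdot,\alpha)$, each root is a strict local maximum and hence locally isolated, which both legitimizes the implicit function theorem and rules out ties—so it is precisely the log-concavity hypothesis that makes the derivative bound come out to $1/B_1$ cleanly rather than depending on secondary properties of $G$.
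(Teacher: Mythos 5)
Your proof is correct and takes essentially the same route as the paper's: both characterize $b^*$ through the first-order condition and bound its derivative in $\alpha$ using log-concavity of $G$ together with $g \geq B_1$. The paper writes the FOC as $\phi(b) := 1-(v-b)g(b)/G(b) = \alpha$ and inverts, getting $\partial b^*/\partial\alpha = 1/\phi'$ with $\phi' \geq \log'G \geq B_1$, which is exactly your implicit-function computation after dividing $F(b,\alpha)$ by $G(b)$; your explicit handling of the boundary/interiority cases is a small addition the paper glosses over.
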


\begin{proof}[Proof of \Cref{lem: partial derivative upper bound}]
First, we have
\begin{align*}
    \frac{\partial r(v, b, \alpha)}{\partial b} = 0 \Longrightarrow 1 - (v-b)\frac{g(b)}{G(b)} = \alpha.
\end{align*}
Denote $\phi(b) = 1 - (v-b)\frac{g(b)}{G(b)}$. Give that \Cref{asm: unknown alpha known G} holds, we have $\phi'(b) = \log'G(b) - (v-b)\log''G(b) \geq B_1$. Thus, $\phi(b)$ is a strictly increasing function and we have
\begin{align*}
    \frac{\partial b^*(v, \alpha)}{\partial \alpha} = (\phi^{-1}(\alpha))' = \frac{1}{\phi'(\phi^{-1}(\alpha))} \leq \frac{1}{B_1}.
\end{align*}
\end{proof}

\begin{lemma}
\label{lem: C_0 constant unknown_alpha_known_G}
Suppose that $G$ is continuous. If $\bbE_v \left[\int_0^{b^*(v, 0)} G(y)\,dy\right] = 0$, then for any $\alpha_0$, no algorithm can obtain positive expected reward.
\end{lemma}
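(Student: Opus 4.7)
My strategy is to show that the hypothesis $\mathbb{E}_v[\int_0^{b^*(v,0)} G(y)\,dy] = 0$ forces the bidder's value $v$ to lie, almost surely, at or below the left endpoint of $\mathrm{supp}(G)$. Once this structural statement is in hand, a direct computation using $r(v, b, \alpha_0) = (v-b)G(b) + \alpha_0\int_0^b G(y)\,dy$ shows that every bid yields non-positive expected per-round reward for every $\alpha_0 \in [0,1]$, and summation over rounds concludes.

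Let $d^* := \inf\{y \ge 0 : G(y) > 0\}$. By continuity of $G$, $G \equiv 0$ on $[0, d^*]$ and $G(y) > 0$ for every $y > d^*$. The first step is to verify that $v > d^*$ implies $b^*(v, 0) > d^*$: any $b \in (d^*, v)$ gives $(v-b)G(b) > 0$, so the largest argmax of $(v-b)G(b)$ must lie strictly above $d^*$. Hence $\int_{d^*}^{b^*(v,0)} G(y)\,dy > 0$ on the event $\{v > d^*\}$, and the assumption therefore forces $\Pr_{v \sim F}[v > d^*] = 0$, i.e., $v \le d^*$ almost surely.

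Second, assuming $v \le d^*$ almost surely, I would bound $r(v, b, \alpha_0)$ pointwise. For $b < d^*$ both summands of $r$ vanish. For $b \ge d^*$, monotonicity of $G$ gives $(v-b)G(b) \le (d^* - b)G(b) \le -\int_{d^*}^b G(y)\,dy$, while $\alpha_0 \int_0^b G(y)\,dy = \alpha_0 \int_{d^*}^b G(y)\,dy$. Adding these yields $r(v, b, \alpha_0) \le -(1-\alpha_0)\int_{d^*}^b G(y)\,dy \le 0$. For any strategy $\pi$ the bid $b_t$ is $(\mathcal{H}_t, v_t)$-measurable, so by the tower property $\mathbb{E}[r_t] = \mathbb{E}[r(v_t, b_t, \alpha_0)] \le 0$; summing over $t$ gives $\mathcal{R}(\pi) \le 0$ for every $\pi$ and every $\alpha_0 \in [0,1]$.

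The main subtlety is the first step --- turning a vanishing expected integral into the pointwise bound $v \le d^*$. It relies essentially on continuity of $G$ (so that $G$ is strictly positive immediately to the right of $d^*$) and on the largest-argmax tie-breaking convention in the definition of $b^*(v, 0)$. Once the support structure is pinned down, the rest is a direct calculation using the formula for $r$ and the chain $v \le d^* \le b$.
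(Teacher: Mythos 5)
Your proof is correct, and its overall architecture matches the paper's: both arguments first show that the hypothesis forces the support of $F$ to lie (almost surely) below the point where $G$ becomes positive, and then verify by a direct computation that every bid then yields non-positive expected reward for every $\alpha_0\in[0,1]$. Where you differ is in how the structural fact is established. The paper first uses \Cref{lem: optimal bid monotonicity} to argue that the non-negative integrand $\int_0^{b^*(v,0)}G$ is monotone in $v$ and hence vanishes for a.e.\ $v$, and then crucially invokes the largest-argmax tie-breaking convention to conclude $b^*(v,0)\geq v$ (since the maximum of $(v-b)G(b)$ is zero in that regime, a smaller maximizer cannot be excluded otherwise), whence $G\equiv 0$ on $[0,F^{-1}(1)]$. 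You instead argue contrapositively: if $v>d^*:=\inf\{y:G(y)>0\}$, then some $b\in(d^*,v)$ already gives $(v-b)G(b)>0$, so the optimal value is strictly positive, forcing $G(b^*(v,0))>0$ and hence (by continuity, $G(d^*)=0$) $b^*(v,0)>d^*$ and a strictly positive integral. This is more self-contained --- it needs neither the monotonicity lemma nor, despite your closing remark, the tie-breaking convention, since any maximizer of a strictly positive maximum works; the tie-breaking rule is only genuinely needed in the paper's route, where the relevant maximum is zero. Your final bound $r(v,b,\alpha_0)\leq -(1-\alpha_0)\int_{d^*}^b G(y)\,dy$ is also a cleaner statement of the inequality the paper writes (somewhat garbled) in its last display, and the tower-property step handling adaptive bids is a detail the paper leaves implicit.
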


\begin{proof}[Proof of \Cref{lem: C_0 constant unknown_alpha_known_G}]
First, as $b^*(v, 0)$ is increasing in $v$ by \Cref{lem: optimal bid monotonicity}, $\int_0^{b^*(v, 0)} G(y)\,dy$ is also increasing in $v$. Thus, if $\bbE_v \left[\int_0^{b^*(v, 0)} G(y)\,dy\right] = 0$, it holds that for nearly all $v$, 
\begin{align*}
    \int_0^{b^*(v, 0)} G(y)\,dy = 0, 
\end{align*}
except a set contained in $[F^{-1}(1), 1]$ on which the probability distribution $F$ has a zero measure.

Next, by the continuity and monotonicity of $G$, for all $v \in [0, F^{-1}(1))$, we have $G(y) \equiv 0$ on $[0, b^*(v, 0)]$. As
$$r(v, b^*(v, 0), 0) = (v - b^*(v, 0)) G(b^*(v, 0)) = 0 = r(v, v, 0),$$ we have $b^*(v, 0) \geq v$ by the tie-breaking rule in the definition of $b^*(v, 0)$. Thus, $G(y)$ remains zero on $[0, v]$ for all $v < F^{-1}(1)$. Again by the continuity of $G$, we have $G(y) \equiv 0$ on $[0, F^{-1}(1)]$ where $F^{-1}(1) = \inf \{v: F(v) = 1\}$.

For nearly all $v$ except a zero-measure set, 
\begin{align*}
    r(v, b, \alpha_0) & = (v -b)G(b) + \alpha \int_{0}^{b}G(y)\, dy \\
    & = \alpha_0 \int_{v}^{b} \alpha_0 G(y) - G(v) \, dy \\
    & \leq 0,
\end{align*}
where the inequality holds (1) by $G(y) \equiv 0$ when $b \leq v$ and (2) by $\alpha G(y) \leq G(v)$ for $y \in [v, b]$ when $b \geq v$. Therefore, no algorithm can obtain positive expected reward under such a distribution $G$ regardless of the seller's credibility $\alpha_0$.
\end{proof}

\begin{example}
\label{eg: bad example unknown_alpha_known_G} Let $v \equiv 1$. Consider the following distribution:
\begin{equation*}
    G(y) = \begin{cases}
    0 & \text{if } 0 \leq y < \frac{1}{3} \\
    \frac{3}{4}y + \frac{1}{4} & \text{if } \frac{1}{3} \leq y \leq 1
    \end{cases}.
\end{equation*}
Then for $b \in [1/3, 1]$, 
\begin{align*}
    r(v, b, \alpha_0) & = (1-b)G(b) + \alpha_0 \int_{0}^{b}G(y)\,dy \\
    & = - \frac{6-3\alpha_0}{8}b^2 + \frac{2 + \alpha_0}{4}b + \frac{2 - \alpha_0}{8}.
\end{align*}
The optimal bid is calculated by
\begin{align*}
    b^* & = \arg\max_{b} r(v, b, \alpha_0) = \frac{1}{3}\left(1 + \frac{2}{2/\alpha_0 - 1}\right).
\end{align*}
Taking $\alpha_0 = \frac{1}{\sqrt{T} + 1/2}$, we have 
\begin{align*}
    \int_{0}^{b^*} G(y) dy & = \int_{\frac{1}{3}}^{\frac{1}{3} \cdot \left(1+\frac{1}{\sqrt{T}}\right)} \left(\frac{3}{4}y + \frac{1}{4}\right) dy = \frac{1}{24} (\frac{1}{T} + \frac{4}{\sqrt{T}}) \longrightarrow 0.
\end{align*}
Thus, $C_0$ in the proof of \Cref{thm: unknown alpha known G} is neither a positive constant nor $0$, but an infinitesimal o(1). As a result, we are not able to give an $O(\log{T}/\sqrt{t-1})$ upper bound for $w_t$.

Intuitively, it requires $b_t \rightarrow b_t^*$ for an algorithm to achieve low regret. However, when $b_t$ is approaching $\frac{1}{3}\left(1+\frac{1}{\sqrt{T}}\right)$, although the bidder can still win with positive probability, it becomes harder and harder for the bidder to distinguish different $\alpha_0$ under bandit feedback since in the winning rounds $b_t - d_t \leq b_t - \frac{1}{3} \longrightarrow 0$.
\end{example}
\section{Missing Proofs of \Cref{sec: unknown alpha unknown G}}
\label{app: unknown alpha unknown G}

\begin{proposition}
\label{pro: h_W and l_W}
Suppose that \Cref{asm: unknown alpha unknown G 2} holds.
There exists positive constants $l_W$ and $h_W$, such that $\forall x \in [0, 1 + W]$,
\begin{align}
    \max \{|\log'G(x)|, |\log'(1 - G(x))|\} & \leq h_W, \label{eq:h_W} \\
    \min \{-\log''G(x), -\log''(1 - G(x))\} & \geq l_W. \label{eq:l_W}
\end{align}
\end{proposition}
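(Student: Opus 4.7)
The plan is a direct computation of the log-derivatives of $G$ and $1-G$, combined with a continuity-compactness argument on the interval $[0,1+W]$. A short calculation yields the explicit forms
\begin{align*}
\log' G(x) &= \frac{g(x)}{G(x)}, \\
\log'(1-G(x)) &= -\frac{g(x)}{1-G(x)}, \\
-\log'' G(x) &= \frac{g(x)^2 - g'(x)\,G(x)}{G(x)^2}, \\
-\log''(1-G(x)) &= \frac{g(x)^2 + g'(x)(1-G(x))}{(1-G(x))^2},
\end{align*}
each of which is continuous in $x$ on $[0,1+W]$ under \Cref{asm: unknown alpha unknown G 2}.

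For the upper bound $h_W$, I would use $g\leq B_2$ together with the fact that $G$ and $1-G$ are continuous and strictly positive on the compact interval $[0,1+W]$, and hence attain positive minima $\underline{G},\underline{G}'>0$. This gives $|\log' G(x)| \leq B_2/\underline{G}$ and $|\log'(1-G(x))| \leq B_2/\underline{G}'$, so that taking $h_W := B_2/\min(\underline{G},\underline{G}')$ suffices.

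For the lower bound $l_W$, I first observe that log-concavity of $g$ implies (via Pr\'ekopa--Leindler) log-concavity of $G$ and of $1-G$, so $-\log'' G \geq 0$ and $-\log''(1-G) \geq 0$ pointwise. To upgrade to strict positivity I would case-split on the size of $G(x)$: when $G(x)$ is small, the dominant term $g(x)^2/G(x)^2 \geq B_1^2/G(x)^2$ controls the competing term $|g'(x)|/G(x) \leq B_3/G(x)$; when $G(x)$ is bounded away from $0$, the quantitative bounds $g\geq B_1$ and $|g'|\leq B_3$, together with log-concavity of $g$, force a strictly positive value in the explicit formula. Continuity on the compact interval then upgrades the pointwise positivity to a uniform lower bound; a symmetric argument handles $-\log''(1-G)$, and we take $l_W$ to be the smaller of the two resulting constants.

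The main obstacle will be establishing the strict positive lower bound $l_W$: Pr\'ekopa--Leindler alone only gives non-negativity, and the explicit formula for $-\log'' G$ contains a term $-g'(x)\,G(x)$ that could in principle cancel $g(x)^2$ whenever $g'(x)$ is large and positive. The real work lies in combining the quantitative bounds on $g, g'$ with the log-concavity of $g$ on the compact interval $[0,1+W]$ to produce a single positive constant $l_W$ that holds uniformly in $x$.
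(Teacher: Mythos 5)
Your skeleton is the same as the paper's: the paper's entire proof consists of citing \citet{an1996log} for the fact that log-concavity of $g$ implies log-concavity of $G$ and $1-G$, and then asserting that \eqref{eq:h_W} and \eqref{eq:l_W} ``trivially hold'' on the bounded interval. Your explicit formulas for $\log'G$, $\log'(1-G)$, $-\log''G$, $-\log''(1-G)$ are correct, and your compactness argument for \eqref{eq:h_W} is the intended one --- with the caveat (shared by the paper) that \Cref{asm: unknown alpha unknown G 2} does not literally force $G$ and $1-G$ to be bounded away from $0$ on all of $[0,1+W]$: if the support of $G$ is exactly $[0,1+W]$ then $g(0)/G(0)$ blows up. In the paper's downstream use the arguments $\varepsilon_t(\alpha)$ lie in a compact subinterval of $(0,1+W)$, so this is a blemish in the statement rather than in your argument.

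The genuine gap is in \eqref{eq:l_W}, and while you correctly located it, your proposed fix does not close it. In the regime where $G(x)$ is bounded away from $0$, the bounds $B_1<g<B_2$, $|g'|\le B_3$ together with log-concavity of $g$ do \emph{not} force $g^2-g'G$ (or $g^2+g'(1-G)$) away from zero: log-concavity is an inequality $-\log''\ge 0$ that can be attained with equality on a whole interval. The exponential density $g(x)=\lambda e^{-\lambda x}$ satisfies every clause of \Cref{asm: unknown alpha unknown G 2} on $[0,1+W]$, yet $\log(1-G(x))=-\lambda x$ is affine, so $-\log''(1-G)\equiv 0$ and the left-hand side of \eqref{eq:l_W} is identically zero; no positive $l_W$ exists. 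Pointwise strict positivity therefore cannot be ``upgraded by compactness,'' because it already fails pointwise. The paper's one-line proof has exactly the same defect --- ``trivially holds'' conceals the step from $-\log''\ge 0$ to $-\log''\ge l_W>0$ --- so the honest conclusion is that \eqref{eq:l_W} needs a strengthened hypothesis (strict log-concavity of $g$ with a quantitative modulus, or simply assuming the uniform curvature bound outright), not that a cleverer case analysis will rescue it. Your write-up is in fact more useful than the paper's, since it makes visible exactly where the missing hypothesis would have to enter.
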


Note that if the density function $g$ is log-concave, then the cumulative distribution function $G$ and $1-G$ are both log-concave \cite{an1996log}. Thus, \Cref{eq:h_W} and \Cref{eq:l_W} trivially holds for the bounded interval $[0, 1 + W]$.

\convergencebound*

\begin{proof}[Proof of \Cref{lem: C.2}]
Denote $r_s = \frac{1}{2B_1\sqrt{T_s}}$. For any $d \in [0, 1 + W]$, the probability that there exists a point $y \in \{d_t\}_{t \in \Gamma_s}$ such that $|y - d| \leq r_s$ is at least
\begin{align*}
    1 - (1 - 2B_1 \cdot r_s)^{T_s} = 1 - \left(1 - \frac{1}{\sqrt{T_s}}\right)^{T_s} \approx 1 - e^{-\sqrt{T_s}} \geq 1 - \delta/2.
\end{align*}
Then for any $d$ and $y$ such that $|y - d| \leq r_s$, we can decompose $|\hat{G}_s(d;\alpha_0) - G(z)|$ in the following,
\begin{align*}
    &|\hat{G}_s(d;\alpha_0) - G(z)| \\ \leq \, &|\hat{G}_s(d;\alpha_0) - \hat{G}_s(y;\alpha_0)| + |\hat{G}_s(y;\alpha_0) - \mathbb{G}_s(y)| + |\mathbb{G}_s(y) - G(y)| + |G(y) - G(d)|
    \\ \leq \, &B_2 \cdot r_s + \frac{1}{T_s} + \sqrt{\frac{log(4 / \delta)}{2T_s}} + B_2 \cdot r_s
    \\ = \, &\frac{B_2}{B_1\sqrt{T_s}} + \sqrt{\frac{log(4 / \delta)}{2T_s}} + \frac{1}{T_s}.
\end{align*}
The second inequality follows from $\hat{G}_s(\cdot;\alpha_0)$ is $B_2$-Lipschitz, \Cref{lem: C.1},  Dvoretzky-Kiefer-Wolfowitz (DKW) inequality and $G$ is $B_2$-Lipschitz. It also holds with probability at least $1 - \delta/2$ since $T_s \gg log^2(2/\delta)$.
\end{proof}

Given \Cref{lem: C.2}, $\hat{G}_s(\cdot; \alpha_0)$ is arbitrarily close to $G$ when $T_s$ is sufficiently large. In addition, \citet{dumbgen2009maximum} also show $\hat{g}_s(\cdot, \alpha_0)$ is arbitrarily close to $g$ when $T_s$ is sufficiently large. Therefore, we can show,

\begin{proposition}
\label{pro: h_W_hat and l_W_hat}
Suppose that \Cref{asm: unknown alpha unknown G 2} holds. Under \Cref{algo: unknown alpha unknown G}, there exists positive constants $\Tilde{l}_W$ and $\Tilde{h}_W$, such that both
\begin{align*}
    max \{|\log'\hat{G}_s(x; \alpha_0)|, |\log'(1 - \hat{G}_s(x; \alpha_0))|\} \leq \Tilde{h}_W, \forall x \in [0, 1 + W], \forall s \in [S]
\end{align*}
and 
\begin{align*}
    min \{-\log''\hat{G}_s(x; \alpha), -\log''(1 - \hat{G}_s(x; \alpha))\} \geq \Tilde{l}_W, \forall x \in [0, 1 + W], \forall \alpha \in [\underline{\alpha}, 1], , \forall s \in [S]
\end{align*}
hold almost surely.
\end{proposition}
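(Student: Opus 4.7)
The plan is to deduce the uniform bounds on $\hat{G}_s(\cdot;\alpha)$ from the corresponding bounds on the true $G$ established in \Cref{pro: h_W and l_W}, via the uniform consistency $\hat{G}_s(\cdot;\alpha_0)\to G$ from \Cref{lem: C.2} and the companion $L^\infty$-consistency of $\hat{g}_s(\cdot;\alpha_0)$ referenced immediately after that lemma. Since the algorithm runs over only $S = O(\log\log T)$ episodes with $T_s \geq T^{1/2}$, it suffices to establish the bounds on a high-probability event and then take a union bound over episodes; the resulting deterministic constants give the almost-sure statement.

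For the upper bound $\tilde{h}_W$, I would write $\log'\hat{G}_s(x;\alpha_0) = \hat{g}_s(x;\alpha_0)/\hat{G}_s(x;\alpha_0)$. The numerator is bounded above by $B_2$ directly from the constraint class $\mathcal{P}$ built into the algorithm. For the denominator, log-concavity of $\hat{g}_s$ together with uniform consistency $\hat{g}_s\to g$ forces $\hat{g}_s \geq B_1/2$ on $[0,1+W]$ once $T_s$ exceeds a deterministic threshold, so that $\hat{G}_s(x) \geq (B_1/2)\,x$ near $x=0$; this keeps $\hat{g}_s(x)/\hat{G}_s(x)$ comparable to $g(x)/G(x)$ and hence bounded by a constant multiple of $h_W$ on the whole interval. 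Symmetric reasoning handles $\log'(1-\hat{G}_s)$, using that $1-\hat{G}_s$ is log-concave (An 1996) and bounded below near $x=1+W$.

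For the lower bound $\tilde{l}_W$, observe first that $-\log''\hat{G}_s(\cdot;\alpha)\geq 0$ always, since $\hat{g}_s(\cdot;\alpha)$ is log-concave by construction and log-concavity transfers to the CDF and survival function. To upgrade to a strictly positive uniform constant I would use the identity
\begin{align*}
-\log''\hat{G}_s(x;\alpha) \;=\; \left(\frac{\hat{g}_s(x;\alpha)}{\hat{G}_s(x;\alpha)}\right)^2 - \frac{\hat{g}_s'(x;\alpha)}{\hat{G}_s(x;\alpha)}
\end{align*}
and argue by stability: $(\hat{g}_s/\hat{G}_s)^2$ converges to $(g/G)^2$ on interior compact subsets by the analysis of the previous paragraph, while $\hat{g}_s'/\hat{G}_s$ converges to $g'/G$ using a $C^1$-consistency result for the log-concave MLE on interior sub-intervals together with the envelope $|g'|\leq B_3$ from \Cref{asm: unknown alpha unknown G 2}. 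Since the same identity applied to $G$ yields $-\log'' G \geq l_W$ by \Cref{pro: h_W and l_W}, this delivers $-\log''\hat{G}_s(\cdot;\alpha_0) \geq l_W/2$ for $T_s$ large. Uniformity over $\alpha\in[\underline{\alpha},1]$ is recovered by noting that the samples $(p_t-(1-\alpha)b_t)/\alpha$ are a linear reparametrization of the true $d_t$'s, so $\hat{g}_s(\cdot;\alpha)$ converges to a rescaled-and-shifted log-concave density that still obeys \Cref{pro: h_W and l_W}-type bounds; continuity of the map $\alpha\mapsto \hat{g}_s(\cdot;\alpha)$ together with compactness of $[\underline{\alpha},1]$ then yields a single uniform $\tilde{l}_W$.

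The main obstacle will be the second-derivative lower bound, because $C^0$-convergence $\hat{g}_s\to g$ does not automatically imply convergence of $\hat{g}_s'\to g'$, yet the identity above explicitly involves $\hat{g}_s'$. I expect to need either a $C^1$-consistency theorem for the log-concave MLE on interior compact subsets, or a direct argument exploiting the piecewise-exponential structure of $\hat{g}_s$ together with the envelope $|g'|\leq B_3$ to control its divided differences. Boundary behavior at $x=0$ and $x=1+W$ may require working on a slightly shrunk interval first, then absorbing the short boundary strips by invoking the finite essential supremum enforced by the log-concave shape constraints and the membership $\hat{g}_s \in \mathcal{P}$.
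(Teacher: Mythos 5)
Your overall route---transferring the bounds of \Cref{pro: h_W and l_W} from $G$ to $\hat{G}_s$ via the consistency of $\hat{G}_s$ and $\hat{g}_s$---is exactly what the paper intends; in fact the paper offers no proof of this proposition beyond the one-sentence assertion that \Cref{lem: C.2} together with the consistency of $\hat{g}_s$ from \citet{dumbgen2009maximum} yields it. So you are not missing the paper's argument; there isn't one. But your plan, as written, has concrete gaps. First, the uniformity over $\alpha\in[\underline{\alpha},1]$: since $p_t=\alpha_0 d_t+(1-\alpha_0)b_t$, the transformed samples are $(p_t-(1-\alpha)b_t)/\alpha = (1-\alpha_0/\alpha)b_t+(\alpha_0/\alpha)d_t$, which carry a $t$-dependent shift $(1-\alpha_0/\alpha)b_t$; for $\alpha\neq\alpha_0$ these are not i.i.d.\ draws from any fixed rescaled copy of $G$, so your ``linear reparametrization plus consistency'' step does not apply, and ``continuity in $\alpha$ plus compactness'' only yields a sample-dependent constant for each realization, not a single deterministic $\Tilde{l}_W$ valid for all $s$ almost surely.

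Second, the second-derivative bound. The log-concave MLE has piecewise linear $\hat{\Psi}_s$, so a.e.\ one has $-\log''\hat{G}_s=(\hat{g}_s/\hat{G}_s)\left(\hat{g}_s/\hat{G}_s-\hat{\Psi}_s'\right)$; log-concavity gives only nonnegativity, and upgrading to a positive constant requires an upper bound on the slope $\hat{\Psi}_s'$, i.e., precisely the $C^1$-type control you flag but do not supply. Sup-norm consistency of $\hat{g}_s$ does bound averaged slopes on interior subintervals (by concavity of $\hat{\Psi}_s$), but not at the endpoints of $[0,1+W]$, and the ``absorb the boundary strips'' step cannot work as described: below the smallest transformed sample $\hat{G}_s$ vanishes identically, so $\log'\hat{G}_s$ and $\log''\hat{G}_s$ are unbounded or undefined there, and the class $\mathcal{P}$ only caps the density from above and does not prevent this. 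Relatedly, every consistency input is a high-probability statement for finite $T_s$, so no argument built on them can deliver the ``almost surely'' uniform constants claimed; at best one obtains the bounds on a high-probability event, which is all that the downstream argument (\Cref{lem: C.3}) actually needs. A clean repair is to strengthen the constraint set $\mathcal{P}$ (e.g., additionally impose $\hat{g}_s\geq B_1$ and a derivative bound matching \Cref{asm: unknown alpha unknown G 2} on $[0,1+W]$), after which the proposition holds deterministically by the same elementary computation that gives \Cref{pro: h_W and l_W}.
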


\thirdtheorem*

\begin{proof}[Proof of \Cref{thm: unknown alpha unknown G full}]
The lower bound directly follows from \Cref{lem: known alpha unknown G main lower} since no algorithm can avoid $\Omega(\sqrt{T})$ regret even with known $\alpha_0$. Then we mainly focus on the upper bound. We first rewrite the regret per round in the following way,
\begin{align*}
    & r(v_t, b_t^*, \alpha_0) - r(v_t, b_t, \alpha_0) \\
    = &\int_0^{b_t^*}(v_t - \alpha_0x - (1 - \alpha_0)b_t^*)g(x)dx - \int_0^{b_t}(v_t - \alpha_0x - (1 - \alpha_0)b_t)g(x)dx\\
    = &\int_0^{b_t^*}(v_t - \alpha_0x - (1 - \alpha_0)b_t^*)g(x)dx - \int_0^{b_t^*}(v_t - \alpha_0x - (1 - \alpha_0)b_t^*)\hat{g}_{s-1}(x; \widetilde{\alpha}_{s-1})dx \\
    &+ \int_0^{b_t^*}(v_t - \alpha_0x - (1 - \alpha_0)b_t^*)\hat{g}_{s-1}(x; \widetilde{\alpha}_{s-1})dx - \int_0^{b_t}(v_t - \alpha_0x - (1 - \alpha_0)b_t)\hat{g}_{s-1}(x; \widetilde{\alpha}_{s-1})dx \\
    &+ \int_0^{b_t}(v_t - \alpha_0x - (1 - \alpha_0)b_t)\hat{g}_{s-1}(x; \widetilde{\alpha}_{s-1})dx - \int_0^{b_t}(v_t - \alpha_0x - (1 - \alpha_0)b_t)g(x)dx \\
    \leq &\int_0^{b_t^*}(v_t - \alpha_0x - (1 - \alpha_0)b_t^*)g(x)dx - \int_0^{b_t^*}(v_t - \alpha_0x - (1 - \alpha_0)b_t^*)\hat{g}_{s-1}(x; \widetilde{\alpha}_{s-1})dx \\
    & + \int_0^{b_t}(v_t - \alpha_0x - (1 - \alpha_0)b_t)\hat{g}_{s-1}(x; \widetilde{\alpha}_{s-1})dx - \int_0^{b_t}(v_t - \alpha_0x - (1 - \alpha_0)b_t)g(x)dx\\
    = & (v_t - b_t^*)\cdot[G(b_t^*) - \hat{G}_{s-1}(b_t^*; \widetilde{\alpha}_{s-1})] - (v_t - b_t)\cdot[G(b_t) - \hat{G}_{s-1}(b_t; \widetilde{\alpha}_{s-1})] + \alpha_0\int_{b_t}^{b_t^*}[G(x) - \hat{G}_{s-1}(x; \widetilde{\alpha}_{s-1})]dx \\
    \leq & |G(b_t^*) - \hat{G}_{s-1}(b_t^*; \widetilde{\alpha}_{s-1})| + |G(b_t) - \hat{G}_{s-1}(b_t; \widetilde{\alpha}_{s-1})| + \sup_x|G(x) - \hat{G}_{s-1}(x; \widetilde{\alpha}_{s-1})|,
\end{align*}
where the first inequality holds by the choice of $b_t$, i.e., 
\begin{align*}
    \int_0^{b_t^*}(v_t - \alpha_0x - (1 - \alpha_0)b_t^*)\hat{g}_{s-1}(x; \widetilde{\alpha}_{s-1})dx \leq \int_0^{b_t}(v_t - \alpha_0x - (1 - \alpha_0)b_t)\hat{g}_{s-1}(x; \widetilde{\alpha}_{s-1})dx.
\end{align*}

Let $\mathrm{Regret}_s$ be the regret achieved in episode $s$. We have
\begin{align*}
    \mathrm{Regret}_s &\leq \sum_{t \in \Gamma_s}|G(b_t^*) - \hat{G}_{s-1}(b_t^*; \widetilde{\alpha}_{s-1})| + \sum_{t \in \Gamma_s}|G(b_t) - \hat{G}_{s-1}(b_t; \widetilde{\alpha}_{s-1})| + T_s\sup_x|G(x) - \hat{G}_{s-1}(x; \widetilde{\alpha}_{s-1})|\\
    & \leq \frac{12AB_2(2+W)T_s}{\underline{\alpha}^2W^2\Tilde{l}_W} + \frac{3B_2\sqrt{T_s}}{B_1} + 3\sqrt{\frac{\log(16S/\delta)T_s}{2}} + 3,
\end{align*}
where the second inequality holds by \Cref{lem: C.3} and \Cref{lem: C.4} (setting $\mathcal{K}_s = \frac{4A}{\underline{\alpha}^2W^2\Tilde{l}_W}$ and $\delta := \delta/2S$). Hence, the inequality holds with probability at least $1 - \delta/S$. Finally, by a union bound over $S$ episodes and the fact that $T_s/T_{s-1} = \sqrt{T}$ and $S \leq \log \log T$, we complete our proof.
\end{proof}

\begin{lemma}
    \label{lem: C.3}
    For each episode $s$, we have with probability at least $1 - \delta / 2S$,
    \begin{align*}
        ||\widetilde{\alpha}_s - \alpha_0|| \leq \frac{4A}{\underline{\alpha}^2W^2\Tilde{l}_W},
    \end{align*}
    where $$A = \frac{W\Tilde{h}_W}{2\overline{\beta}T_s} \cdot \left(\frac{B_2}{B_1\sqrt{T_s}} + \sqrt{\frac{log(16S / \delta)}{2T_s}} + \frac{1}{T_s}\right) + \frac{W\Tilde{h}_W}{2\overline{\beta}}\sqrt{\frac{\log(2S/\delta)}{T_s}}.$$
\end{lemma}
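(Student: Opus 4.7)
The target bound has the structure ``gradient over curvature'', so my plan is a standard M-estimator analysis: combine a second-order Taylor expansion of $\calL_s$ at $\alpha_0$ with a high-probability bound on $|\calL_s'(\alpha_0)|$. Writing $\pi_t(\alpha) := \hat{G}_s(\varepsilon_t(\alpha); \alpha)$ and using $\varepsilon_t'(\alpha) = -\underline{\alpha}W/(2\alpha^2)$, since $\widetilde{\alpha}_s$ is the constrained minimizer of $\calL_s$ on $[\underline{\alpha},1]$ the first-order optimality condition combined with Taylor expansion at $\alpha_0$ gives
\begin{align*}
    0 \geq \calL_s(\widetilde{\alpha}_s) - \calL_s(\alpha_0) \geq \calL_s'(\alpha_0)(\widetilde{\alpha}_s - \alpha_0) + \tfrac{1}{2}\mu(\widetilde{\alpha}_s - \alpha_0)^2,
\end{align*}
where $\mu$ is a uniform lower bound on $\calL_s''$ over $[\underline{\alpha},1]$. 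Rearranging yields $|\widetilde{\alpha}_s - \alpha_0| \leq 2|\calL_s'(\alpha_0)|/\mu$, so it suffices to show $|\calL_s'(\alpha_0)| \leq 2A$ and $\mu \geq \underline{\alpha}^2 W^2 \Tilde{l}_W$.

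For the gradient, differentiating the Bernoulli log-likelihood gives $\calL_s'(\alpha_0) = -T_s^{-1}\sum_{t \in \Gamma_s} (\xi_t - \pi_t(\alpha_0))\,\pi_t'(\alpha_0) / [\pi_t(\alpha_0)(1-\pi_t(\alpha_0))]$. I would decompose the numerator as $(\xi_t - G(\varepsilon_t(\alpha_0))) + (G(\varepsilon_t(\alpha_0)) - \hat{G}_s(\varepsilon_t(\alpha_0);\alpha_0))$: by the design of $\xi_t$ the first piece is conditionally mean-zero given $b_t$ and uniformly bounded, so a weighted Azuma-Hoeffding inequality (with weight controlled by $\tfrac{W\Tilde{h}_W}{2\overline{\beta}}$ via the proposition on $\Tilde{h}_W,\Tilde{l}_W$) yields the concentration term $\tfrac{W\Tilde{h}_W}{2\overline{\beta}}\sqrt{\log(2S/\delta)/T_s}$ of $A$ with probability at least $1-\delta/(4S)$; the second piece is the uniform estimation error of $\hat{G}_s$ bounded by Lemma C.2 at rate $\tfrac{B_2}{B_1\sqrt{T_s}} + \sqrt{\log(16S/\delta)/(2T_s)} + 1/T_s$ on an event of probability at least $1-\delta/(4S)$, which after being multiplied by the same weight reproduces the first term of $A$. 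For the curvature, differentiating once more shows that the dominant contribution of $\calL_s''(\alpha)$ is $T_s^{-1}\sum_t [\pi_t'(\alpha)]^2/[\pi_t(\alpha)(1-\pi_t(\alpha))]$, with the residual $(\xi_t-\pi_t(\alpha))$-terms vanishing in mean at $\alpha_0$ and absorbed on the same good event; using $|\pi_t'(\alpha)| \geq \hat{g}_s(\varepsilon_t(\alpha);\alpha)\cdot\underline{\alpha}W/(2\alpha^2)$ together with the log-curvature lower bounds $-\log''\hat{G}_s,\,-\log''(1-\hat{G}_s) \geq \Tilde{l}_W$ from the proposition on $\Tilde{h}_W,\Tilde{l}_W$ gives $\calL_s''(\alpha) \geq \underline{\alpha}^2 W^2 \Tilde{l}_W$ uniformly on $[\underline{\alpha},1]$.

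The main obstacle is turning the pointwise log-curvature constant $\Tilde{l}_W$ into a uniform-in-$\alpha$ strong-convexity bound on $\calL_s''$: the estimator $\hat{G}_s(\cdot;\alpha)$ itself depends on the random data through the log-concave fit to $\{(p_t-(1-\alpha)b_t)/\alpha\}_{t \in \Gamma_s}$, so identifiability of $\alpha_0$ relies on the variation of $b_t$ across $t$ (which keeps the implicit sample distribution from being exactly log-concave for $\alpha \neq \alpha_0$) together with the support/density constraints built into the class $\mathcal{P}$ ruling out spurious fits. Threading the proposition on $\Tilde{h}_W,\Tilde{l}_W$ through this while verifying the design requirement $\varepsilon_t(\alpha) \in [0,1+W]$ for all $\alpha \in [\underline{\alpha},1]$ is the heart of the technical argument; the final step is a union bound over the two probabilistic events for a total failure probability of $\delta/(2S)$.
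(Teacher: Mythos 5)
Your skeleton---a second-order Taylor expansion of the MLE loss at the true parameter, a high-probability bound on the score obtained by splitting $\xi_t - \hat{G}_s(\varepsilon_t;\cdot)$ into the conditionally mean-zero part $\xi_t - G(\varepsilon_t)$ (Hoeffding) plus the estimation error $G - \hat{G}_s$ (\Cref{lem: C.2}), and a strong-convexity lower bound drawn from the log-concavity constants of \Cref{pro: h_W_hat and l_W_hat}---is exactly the paper's argument, and your gradient analysis reproduces both terms of $A$.

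The gap is in the curvature step, which you yourself flag as ``the heart of the technical argument'' without closing it. Working directly in $\alpha$, the chain rule gives
\begin{align*}
-\frac{d^2}{d\alpha^2}\log\hat{G}_s(\varepsilon_t(\alpha);\alpha) \;=\; -\bigl(\log''\hat{G}_s\bigr)(\varepsilon_t(\alpha))\,[\varepsilon_t'(\alpha)]^2 \;-\; \bigl(\log'\hat{G}_s\bigr)(\varepsilon_t(\alpha))\,\varepsilon_t''(\alpha),
\end{align*}
and since $\varepsilon_t(\alpha)=b_t+\underline{\alpha}W/(2\alpha)$ is not affine in $\alpha$, the term with $\varepsilon_t''(\alpha)=\underline{\alpha}W/\alpha^3>0$ does not vanish; its coefficient is a first log-derivative bounded only by $\Tilde{h}_W$ and its sign flips between the $\xi_t=1$ and $\xi_t=0$ branches, so this contribution to $\calL_s''$ is of constant order and uncontrolled sign. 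Your proposed fix---that these residuals ``vanish in mean at $\alpha_0$'' and are absorbed on the good event---does not work: the Taylor remainder is evaluated at an intermediate point $\tilde\alpha\neq\alpha_0$ where they have no reason to be centered, and even at $\alpha_0$ their fluctuation is of order $\Tilde{h}_W W/\underline{\alpha}^2$, comparable to the curvature constant $\underline{\alpha}^2W^2\Tilde{l}_W$ you need to certify. The paper sidesteps all of this by reparameterizing $\beta=1/\alpha\in[1,\overline{\beta}]$ with $\overline{\beta}=1/\underline{\alpha}$, under which $\varepsilon_t(\beta)=b_t+\tfrac{W}{2\overline{\beta}}\beta$ is affine, $\varepsilon_t''\equiv 0$, and $\calL_s''(\beta)=\tfrac{1}{T_s}\sum_t \tfrac{W^2}{4\overline{\beta}^2}\zeta_t(\beta)$ with $\zeta_t\geq\Tilde{l}_W$ pointwise and deterministically; the bound is then transferred back via $|\widetilde{\alpha}_s-\alpha_0|\leq|\hat{\beta}_s-\beta_0|$ because $\hat{\beta}_s,\beta_0\geq 1$. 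Your separate worry about identifiability resting on the variation of $b_t$ is a red herring in the paper's treatment: the strong convexity comes entirely from the log-concavity constants, not from design variation.
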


\begin{proof}
    For notation simplicity, we re-parameterize $\alpha$ by denoting $\beta = 1/\alpha$. By \Cref{asm: unknown alpha unknown G}, $\beta$ has an upper bound $\overline{\beta} = 1/\underline{\alpha}$. To compute $\hat{\beta}_s$, we minimize the
    following MLE loss function,
   \begin{align*}
    \calL_s(\beta) = - \frac{1}{T_s} \sum_{t \in \Gamma_s} & \left[ \mathbb{I}\left\{b_t + \frac{W}{2\overline{\beta}} \geq p_t \right\} \log \hat{G}_s\left(b_t + \frac{W}{2\overline{\beta}} \beta; \beta\right) \right. \nonumber \\
    & + \left. \mathbb{I}\left\{b_t + \frac{W}{2\overline{\beta}}  < p_t \right\} \log\left(1 - \hat{G}_s\left(b_t + \frac{W}{2\overline{\beta}} \beta; \beta\right)\right)\right].
    \end{align*}
    We abuse notation and denote $\varepsilon_t(\beta) = b_t + \frac{W}{2\overline{\beta}} \beta$. Observe that
    \begin{align*}
        \bbE \left[\mathbb{I}\left\{b_t + \frac{W}{2\overline{\beta}} \geq p_t \right\}\right] & = \bbE \left[\mathbb{I}\left\{b_t + \frac{W}{2\overline{\beta}} \geq \alpha d_t  + (1-\alpha) b_t \right\}\right] = \bbE \left[\mathbb{I}\left\{b_t + \frac{W}{2\overline{\beta}} \beta \geq d_t\right\}\right] = G(\varepsilon_t(\beta)). 
    \end{align*}
    When $\hat{G}_s$ equals to the real $G$, $\beta_0 = 1/\alpha_0$ should minimize the MLE loss function.
    By the second-order Taylor theorem, we have
    \begin{align*}
       \mathcal{L}_s(\hat{\beta}_s) -  \mathcal{L}_s(\beta_0) = \mathcal{L}' (\beta_0)(\hat{\beta}_s - \beta_0) + \frac{1}{2}\mathcal{L}''(\Tilde{\beta})(\hat{\beta}_s - \beta_0)^2
    \end{align*}
    for some $\Tilde{\beta}$ on the line segment between $\beta$ and $\hat{\beta}_s$. Given the definition of $\mathcal{L}_s(\beta; \cdot)$, we have 
    \begin{align*}
        \mathcal{L}'_s (\beta) = \frac{1}{T_s} \sum_{t \in \Gamma_s}\frac{W}{2\overline{\beta}}\eta_t(\beta), \qquad \mathcal{L}''_s (\beta) = \frac{1}{T_s} \sum_{t \in \Gamma_s}\frac{W^2}{4\overline{\beta}^2}\zeta_t(\beta)
    \end{align*}
    where $\eta_t(\beta)$ and $\zeta_t(\beta)$ are defined as follows,
    \begin{align*}
        \eta_t(\beta) = - \mathbb{I}\left\{b_t + \frac{W}{2\overline{\beta}} \geq p_t \right\} \log' \hat{G}_s(\varepsilon_t(\beta); \beta) - \mathbb{I}\left\{b_t + \frac{W}{2\overline{\beta}} < p_t \right\} \log'(1 - \hat{G}_s(\varepsilon_t(\beta); \beta)),
    \end{align*}
    \begin{align*}
        \zeta_t(\beta) = - \mathbb{I}\left\{b_t + \frac{W}{2\overline{\beta}} \geq p_t \right\} \log'' \hat{G}_s(\varepsilon_t(\beta); \beta) - \mathbb{I}\left\{b_t + \frac{W}{2\overline{\beta}} < p_t \right\} \log''(1 - \hat{G}_s(\varepsilon_t(\beta); \beta)).
    \end{align*}
    Based on our construction of the algorithm, $b_t$ is independent with $d_t$. Therefore, $\varepsilon_t(\beta_0) = b_t + \frac{W}{2\overline{\beta}} \beta_0$ are independent with $d_t$ for any $t \in \Gamma_s$, we have
    \begin{align*}
        \mathbb{E}[\eta_t(\beta_0)] &= -\frac{\hat{g}_s(\varepsilon_t(\beta_0);\beta_0)}{\hat{G}_s(\varepsilon_t(\beta_0);\beta_0)} \cdot G(\varepsilon_t(\beta_0)) + \frac{\hat{g}_s(\varepsilon_t(\beta_0);\beta_0)}{\hat{G}_s(1 - \varepsilon_t(\beta_0);\beta_0)} \cdot (1 - G(\varepsilon_t(\beta_0))) \\
        &= [\hat{G}_s(\varepsilon_t(\beta_0);\beta_0) - G(\varepsilon_t(\beta_0)))] \cdot [\frac{\hat{g}_s(\varepsilon_t(\beta_0);\beta_0)}{\hat{G}_s(\varepsilon_t(\beta_0);\beta_0)} + \frac{\hat{g}_s(\varepsilon_t(\beta_0);\beta_0)}{1 - \hat{G}_s(\varepsilon_t(\beta_0);\beta_0)}].
    \end{align*}
    Thus, by \Cref{lem: C.2} and \Cref{pro: h_W_hat and l_W_hat}, we have
    \begin{align*}
        |\mathbb{E}[\eta_t(\beta_0)]| &=[\hat{G}_s(\varepsilon_t(\beta_0);\beta_0) - G(\varepsilon_t(\beta_0)))] \cdot [\frac{\hat{g}_s(\varepsilon_t(\beta_0);\beta_0)}{\hat{G}_s(\varepsilon_t(\beta_0);\beta_0)} + \frac{\hat{g}_s(\varepsilon_t(\beta_0);\beta_0)}{1 - \hat{G}_s(\varepsilon_t(\beta_0);\beta_0)}] \\
        & \leq 2\Tilde{h}_W \cdot \left(\frac{B_2}{B_1\sqrt{T_s}} + \sqrt{\frac{log(16S / \delta)}{2T_s}} + \frac{1}{T_s}\right)
    \end{align*}
    holds with probability at least $1 - \delta/4S$. Then by Hoeffding’s inequality and union bound
    \begin{align*}
        |\mathcal{L}'_s (\beta_0)| &\leq \frac{W}{2\overline{\beta}T_s}\sum_{t \in \Gamma_s}|\mathbb{E}[\eta_t(\beta_0)]| + \frac{W\Tilde{h}_W}{2\overline{\beta}}\sqrt{\frac{\log(8S/\delta)}{T_s}} \\
        &\leq \frac{W\Tilde{h}_W}{2\overline{\beta}T_s} \cdot \left(\frac{B_2}{B_1\sqrt{T_s}} + \sqrt{\frac{log(16S / \delta)}{2T_s}} + \frac{1}{T_s}\right) + \frac{W\Tilde{h}_W}{2\overline{\beta}}\sqrt{\frac{\log(8S/\delta)}{T_s}} := A
    \end{align*}
    holds with probability at least $1 - \delta/2S$. By the optimality of $\hat{\beta}_s$,
    \begin{align*}
        \mathcal{L}_s(\hat{\beta}_s) -  \mathcal{L}_s(\beta_0) \leq 0.
    \end{align*}
    Invoking into $\mathcal{L}_s(\beta)$, we have
    \begin{align*}
        \frac{1}{2}\mathcal{L}''_s (\beta_0)(\hat{\beta}_s - \beta_0)^2 \leq -\mathcal{L}'_s (\beta_0)(\hat{\beta}_s - \beta_0)
        \leq A|\hat{\beta}_s - \beta_0|,
    \end{align*}
    \begin{align*}
        |\hat{\beta}_s - \beta_0| \leq \frac{2A}{2\Tilde{l}_W \cdot W^2/(4\overline{\beta}^2)} = \frac{4\overline{\beta}^2A}{W^2\Tilde{l}_W}.
    \end{align*}
    holds with probability at least $1 - \delta/2S$.
    So we have
    \begin{align*}
        ||\widetilde{\alpha}_s - \alpha_0|| \leq  \frac{4\overline{\beta}^2A}{W^2\Tilde{l}_W} = \frac{4A}{\underline{\alpha}^2W^2\Tilde{l}_W}.
    \end{align*}
    holds with probability at least $1 - \delta/2S$.
\end{proof}

\begin{lemma}
    \label{lem: C.4}
    For any fixed $\delta > 0$,  suppose $T_s \gg \log^2(2/\delta)$ and conditioned on $|\hat{\beta}_s - \beta_0| \leq \mathcal{K}_s$, we have for all $d \in [0, 1+W]$, 
    \begin{align*}
        |\hat{G}_s(d;\hat{\beta}_s)- G(d)| \leq 3B_2(2+W)\mathcal{K}_s + \frac{B_2}{B_1\sqrt{T_s}} + \sqrt{\frac{\log(8/\delta)}{2T_s}} + \frac{1}{T_s}
    \end{align*}
    holds with probability at least $1 - \delta$.
\end{lemma}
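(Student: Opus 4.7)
My plan is to combine a triangle inequality with a sample-perturbation analysis for the log-concave MLE. The key observation is a reparameterization identity: using $\beta = 1/\alpha$, $\beta_0 = 1/\alpha_0$, and $p_t = \alpha_0 d_t + (1-\alpha_0)b_t$, the ``pseudo-samples'' $\widetilde{d}_t(\beta) = (p_t-(1-\alpha)b_t)/\alpha$ used to build $\hat{G}_s(\cdot;\alpha)$ satisfy
\begin{align*}
\widetilde{d}_t(\beta) - d_t = \frac{\beta - \beta_0}{\beta_0}(d_t - b_t).
\end{align*}
Hence, on the event $|\hat{\beta}_s - \beta_0| \le \mathcal{K}_s$ from the hypothesis, and using $\beta_0 \ge 1$ together with $d_t \in [0,1]$ and $b_t \in [0,1+W]$, each pseudo-sample is a deterministic perturbation of the true $d_t$ by at most $(2+W)\mathcal{K}_s$.

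Introducing the empirical CDF $\mathbb{G}_s(d;\beta) = \frac{1}{T_s}\sum_{t\in\Gamma_s}\mathbb{I}\{\widetilde{d}_t(\beta)\le d\}$, I split
\begin{align*}
|\hat{G}_s(d;\hat{\beta}_s) - G(d)|
\;\le\; \underbrace{|\hat{G}_s(d;\hat{\beta}_s) - \mathbb{G}_s(d;\hat{\beta}_s)|}_{(\mathrm{I})}
+ \underbrace{|\mathbb{G}_s(d;\hat{\beta}_s) - \mathbb{G}_s(d;\beta_0)|}_{(\mathrm{II})}
+ \underbrace{|\mathbb{G}_s(d;\beta_0) - G(d)|}_{(\mathrm{III})}.
\end{align*}
Term $(\mathrm{I})$ I handle in the spirit of \Cref{lem: C.2}: since the true samples $d_t$ are i.i.d.\ from $G$ with density bounded below by $B_1$, with probability at least $1-\delta/4$ every $d\in[0,1+W]$ is within $r_s = 1/(2B_1\sqrt{T_s})$ of some $d_t$, hence within $r_s + (2+W)\mathcal{K}_s$ of the pseudo-sample $\widetilde{d}_t(\hat{\beta}_s)$; at pseudo-samples \Cref{lem: C.1} (which applies for any $\alpha\in[\underline{\alpha},1]$) gives $|\hat{G}_s - \mathbb{G}_s| \le 1/T_s$, and the $B_2$-Lipschitz property of $\hat{G}_s(\cdot;\hat{\beta}_s)$ extends this to $d$. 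Term $(\mathrm{III})$ is the DKW inequality applied to i.i.d.\ samples $d_t\sim G$, giving $\sqrt{\log(8/\delta)/(2T_s)}$ with probability $1-\delta/4$.

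The heart of the argument is term $(\mathrm{II})$. The perturbation bound yields the sandwich
\begin{align*}
\mathbb{G}_s\bigl(d - (2+W)\mathcal{K}_s;\beta_0\bigr) \;\le\; \mathbb{G}_s(d;\hat{\beta}_s) \;\le\; \mathbb{G}_s\bigl(d + (2+W)\mathcal{K}_s;\beta_0\bigr),
\end{align*}
because shifting every sample by at most $(2+W)\mathcal{K}_s$ moves the empirical measure only within this envelope. Combining with the DKW bound from $(\mathrm{III})$ and the $B_2$-Lipschitz continuity of $G$ gives $(\mathrm{II})\le 2B_2(2+W)\mathcal{K}_s + 2\sqrt{\log(8/\delta)/(2T_s)}$. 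Summing the three bounds on a joint high-probability event (union bound incurs only a constant factor in $\delta$) and absorbing one extra $B_2(2+W)\mathcal{K}_s$ slack from $(\mathrm{I})$ into the leading constant $3B_2(2+W)$ yields the claimed inequality.

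\textbf{Main obstacle.} The most delicate point is that $\hat{\beta}_s$ is a complicated, data-dependent estimator, so the pseudo-samples $\widetilde{d}_t(\hat{\beta}_s)$ are not i.i.d.\ and $\mathbb{G}_s(\cdot;\hat{\beta}_s)$ is not a standard empirical measure to which DKW applies directly. The hypothesis $|\hat{\beta}_s - \beta_0|\le\mathcal{K}_s$ is exactly what permits the sandwich in $(\mathrm{II})$ to be treated as a deterministic shift of the unperturbed empirical measure $\mathbb{G}_s(\cdot;\beta_0)$, for which DKW and the covering event of $(\mathrm{I})$ both hold on events independent of $\hat{\beta}_s$. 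Care is needed to ensure all three high-probability events hold simultaneously under one clean union bound, and that the Lipschitz extension in $(\mathrm{I})$ is applied to $\hat{G}_s(\cdot;\hat{\beta}_s)$ using the uniform density upper bound $B_2$ guaranteed by the function class $\mathcal{P}$ rather than via \Cref{pro: h_W_hat and l_W_hat}, which is only established in the limit.
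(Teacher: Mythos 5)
Your proposal is correct and follows essentially the same route as the paper: the sandwich bound $\mathbb{G}_s(d-(2+W)\mathcal{K}_s;\beta_0)\le\mathbb{G}_s(d;\hat\beta_s)\le\mathbb{G}_s(d+(2+W)\mathcal{K}_s;\beta_0)$ to neutralize the data-dependence of $\hat\beta_s$, the covering argument plus \Cref{lem: C.1} and $B_2$-Lipschitzness to pass from $\hat G_s$ to the empirical measure, and DKW to pass to $G$. The only difference is bookkeeping: the paper compares $\hat{\mathbb{G}}_s(d)$ directly to $G(d\pm(2+W)\mathcal{K}_s)$ in one DKW application, whereas your separate terms $(\mathrm{II})$ and $(\mathrm{III})$ invoke DKW an extra time and so yield a slightly larger constant on the $\sqrt{\log(8/\delta)/(2T_s)}$ term, which does not affect the order of the bound.
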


\begin{proof}
    Let $\hat{\mathbb{G}}_s$ be the empirical distribution of samples $\{\hat{\beta}_s(p_t - b_t) + b_t\}_{t \in \Gamma_s}$, i.e.
    \begin{align*}
        \hat{\mathbb{G}}_s(d) = \frac{1}{T_s} \sum_{t \in \Gamma_s}\mathbb{I} \{\hat{\beta}_s(p_t - b_t) + b_t \leq d\} = \frac{1}{T_s} \sum_{t \in \Gamma_s}\mathbb{I}\{d_t \leq d + (\beta_0 - \hat{\beta}_s)(p_t - b_t)\}.
    \end{align*}
    First, we give a uniform convergence bound for $|\hat{\mathbb{G}}_s(d) - G(d)|$. The main challenge is that we cannot directly apply DKW inequality, since $\hat{\beta}_s$ depends on $d_t$, $t \in \Gamma_s$. To handle this challenge, we bound the lower bound and upper bound of $\hat{\mathbb{G}}_s(d)$ separately. Since $|\hat{\beta}_s - \beta_0| \leq \mathcal{K}_s$, and $b_t \leq 1$, $p_t \leq 1+W$, we have
    \begin{align*}
        \frac{1}{T_s} \sum_{t \in \Gamma_s}\mathbb{I}\{d_t \leq d - (2+W)\mathcal{K}_s\} \leq \hat{\mathbb{G}}_s(d) \leq \frac{1}{T_s} \sum_{t \in \Gamma_s}\mathbb{I}\{d_t \leq d + (2+W)\mathcal{K}_s\} 
    \end{align*}
    Thus, conditioned on $|\hat{\beta}_s - \beta_0| \leq \mathcal{K}_s$, for any $\gamma > 0$, we have
    \begin{align*}
        &\mathbb{P}(\hat{\mathbb{G}}_s(d) - G(d+(2+W)\mathcal{K}_s) \leq \gamma) \\
        \geq \; &\mathbb{P}(\frac{1}{T_s} \sum_{t \in \Gamma_s}\mathbb{I}\{d_t \leq d + (2+W)\mathcal{K}_s\} - G(d+(2+W)\mathcal{K}_s) \leq \gamma) \\
        \geq \; & 1 - \mathbb{P}(\sup_d|\frac{1}{T_s} \sum_{t \in \Gamma_s}\mathbb{I}\{d_t \geq d + (2+W)\mathcal{K}_s\} - G(d+(2+W)\mathcal{K}_s)| > \gamma) \\
        \geq \; & 1 - 2exp(-2T_s\gamma^2).
    \end{align*}
    Similarly, we have $\mathbb{P}(\hat{G(d+(2+W)\mathcal{K}_s) - \mathbb{G}}_s(d) \leq \gamma) \geq 1 - 2exp(-2T_s\gamma^2)$ for any $\gamma > 0$, conditioned on $|\hat{\beta}_s - \beta_0| \leq \mathcal{K}_s$. \\
    Therefore, applying a union bound and Lipschitzness of $G$ we have,
    \begin{align}
    \label{equ: first step in Lemma C.4}
        |\hat{\mathbb{G}}_s(d) - G(d)| \leq \sqrt{\frac{\log(8/\delta)}{2T_s}} + B_2(2 + W)\mathcal{K}_s
    \end{align}
    holds with probability at least $1 - \delta/2$. \\
    Second, we apply the similar technique used in \Cref{lem: C.2} to bound $|\hat{G}_s(d;\hat{\beta}_s)- G(d)|$. Denote $\hat{d}_t = \hat{\beta}_s(p_t - b_t) + b_t$, $\forall t \in \Gamma_s$. Thus, for any $\hat{d}_t$, there must exist at least one $d_t$ s.t. $|d_t - \hat{d}_t| = |(\beta_0 - \hat{\beta}_s)(p_t - b_t)|\leq (2+W)\mathcal{K}_s$. Let $r_s = \frac{1}{2B_1\sqrt{T_s}}$, then for any $d \in [0, 1+W]$, the probability that there exists a point $y \in \{\hat{d}_t\}_{t \in \Gamma_s}$ s.t. $|y - z| \leq r_s + (2+W)\mathcal{K}_s$, is at least, 
    \begin{align*}
    1 - (1 - 2B_1 \cdot r_s)^{T_s} = 1 - (1 - \frac{1}{\sqrt{T_s}}) \approx 1 - e^{-\sqrt{T_s}} \geq 1-\delta/2
    \end{align*}
    Therefore, for any $d \in [0, 1 + W]$, we can decompose $\hat{G}_s(d; \hat{\beta}_s) - G(d)$ in the following,
    \begin{align*}
        &|\hat{G}_s(d; \hat{\beta}_s) - G(d)|\\
        \leq &|\hat{G}_s(d; \hat{\beta}_s) - \hat{G}_s(y; \hat{\beta}_s)| + |\hat{G}_s(y; \hat{\beta}_s) - \hat{\mathbb{G}}_s(y)| + |\hat{\mathbb{G}}_s(y)| - G(y)| + |G(y) - G(d)|.
    \end{align*}
    Indeed, the characterization results by \Cref{lem: C.1} applies to samples $\hat{d}_t$. Then we have $|\hat{G}_s(y; \hat{\beta}_s) - \hat{\mathbb{G}}_s(y)| \leq \frac{1}{T_s}$. . By the Lipshitzness of $\hat{G}_s(\cdot; \hat{\beta}_s)$ and $G$, \Cref{equ: first step in Lemma C.4} and union bound, we have
    \begin{align*}
        |\hat{G}_s(d; \hat{\beta}_s) - G(d)| &\leq 2B_2[r_s + (2+W)\mathcal{K}_s] + \frac{1}{T_s} + \sqrt{\frac{\log(8/\delta)}{2T_s}} + B_2(2 + W)\mathcal{K}_s \\
        &= 3B_2(2+W)\mathcal{K}_s + \frac{B_2}{B_1\sqrt{T_s}} + \sqrt{\frac{\log(8/\delta)}{2T_s}} + \frac{1}{T_s}
    \end{align*}
    holds with probability at least $1 - \delta$ when $T_s \gg \log^2(2/\delta)$.
\end{proof}

\end{document}